\crefname{equation}{}{}
\protected\def\verythinspace{%
  \ifmmode
    \mskip0.3\thinmuskip
  \else
    \ifhmode
      \kern0.050004em
    \fi
  \fi
}
\def\sdef#1{\expandafter\def\csname#1\endcsname}
\newcommand{\checkifloaded}[1]{\@ifpackageloaded{#1}{\sdef{DidWeLoaD#1}{yes, #1 is loaded}}{\sdef{DidWeLoaD#1}{no, #1 is not loaded}}}
\newtheorem{theorem}{Theorem}[section]
\newtheorem{corollary}{Corollary}[theorem]
\newtheorem{lemma}[theorem]{Lemma}
\theoremstyle{definition}
\newtheorem{definition}{Definition}[section]
\begin{document}

\title{Finding ground states of a square Newman-Moore lattice with sides equal to a Mersenne number using the Rule 60 cellular automaton}

\author{Jon\'{a}s Carmona-P\'{i}rez\inst{1}
\and Adrian J. Peguero \inst{2}
\and Vanja Dunjko\inst{2}
\and Maxim Olshanii\inst{2} 
\and Joanna Ruhl\inst{2}\email{joanna.ruhl001@umb.edu}
}

\institute{Instituto de Biomedicina de Sevilla, Hospital Universitario Virgen del Roc\'{i}o/Consejo Superior de Investigaciones Cient\'{i}ficas/Universidad de Sevilla, 41013 Seville, Spain
\and
Department of Physics, University of Massachusetts Boston, Boston Massachusetts 02125, USA
}

\def\received{Received June 17, 2025}

\maketitle

\begin{abstract}
We offer detailed proofs of some properties of the Rule 60 cellular automaton on a ring with a Mersenne number circumference. We then use these properties to define a propagator, and demonstrate its use to construct all the ground state configurations of the classical Newman-Moore model on a square lattice of the same size. In this particular case, the number of ground states is equal to half of the available spin configurations in any given row of the lattice.  
\end{abstract}

\keywords{Rule 60, Newman-Moore, spin lattice, triangular plaquette, Mersenne number, cellular automaton
}

\section{Introduction}

Both the classical Newman-Moore model (a particular two-dimensional spin lattice model with a specific three-spin interaction) \cite{newman1999_5068} and its quantum counterpart \cite{yoshida2014quantumcriticalityisingmodel} play an important role in studies of systems with glassy dynamics (see, for example,\cite{PhysRevE.72.016103, PhysRevE.62.3404} and references therein). Ground states of this 2D model are related to the time propagation of a 1D Rule 60 cellular automaton \cite{sfairopoulos2023_174107}. In this paper, we  will first recap some standard definitions used throughout the paper, then offer detailed proofs of some number-theoretic properties of the Rule 60 cellular automaton on a ring with Mersenne number circumference. Finally, we will use this knowledge to define a propagator which can be used to infer all ground state configurations of the classical Newman-Moore model on a square lattice of the same size.

\section{Rule 60 cellular automaton}
For the binary one-dimensional cellular automaton \cite{martin1984_219}, 
the state of the system at the instant of time $j$ is given by a length $L$ string: 

\begin{align}
\vec{x}_{j} = [x_{0,j},\,x_{1,j},\,\ldots\,\,x_{i,j},\,\ldots,\,x_{L-1,j}]\,\,,
\label{state_structule}
\end{align}
where $x_{i,j}$ is the state of the $i$'th bit at time $j$. Each $x_{i,j}$ is either $0$, if the site is unoccupied, or $1$, if the site is occupied. 

Consider a Rule 60 automaton \cite{martin1984_219} on a ring of a circumference $L$. According to the rule, with this periodic boundary condition the state of the automaton at the instance of time $j+1$ is related to its state at time $j$ by 
\begin{align}
\begin{array}{lcc}
x_{i,j+1} = \text{Xor}(x_{i-1,j},\,x_{i,j}) & \text{ for } & i=1,\,2,\,\ldots,\,L-1
\\
x_{0,j+1}=\text{Xor}(x_{L-1,j},\,x_{0,j}) & \text{ for } & i=0 \qquad \qquad \qquad
\\
j= 0,\,1,\,\ldots \quad.
\end{array}
\label{rule_60}
\end{align}
\subsection{Periodic trajectories}
In what follows, we will be interested in periodic trajectories generated by the rule \eqref{rule_60}.  
\begin{definition}
An initial condition $\vec{x}_{0}$ initiates a \textbf{$\bm{M}$-periodic trajectory} $\left\{\vec{x}_{j}\right\}_{j=0}^{\infty}$
if 
\begin{align*}
&
\vec{x}_{M} = \vec{x}_{0}
\,\,.
\end{align*}
\label{def:period}
\end{definition}
\begin{definition}
An initial condition $\vec{x}_{0}$ initiates a \textbf{$\bm{M}$-fundamental-periodic trajectory} $\left\{\vec{x}_{j}\right\}_{j=0}^{M-1}$
if 
\begin{align*}
&
\vec{x}_{M} = \vec{x}_{0}
\\
&
\text{and}
\\
&
\text{for any } 0<j<M, \quad \vec{x}_{j}\neq \vec{x}_{0}
\,\,.
\end{align*}
\end{definition}
\subsection{Configurations and their characterization}
The state of the system at a given instant of time is called 
a \textbf{configuration}. E.g.\ for the trajectory
\begin{align*}
\begin{array}{lll}
j=0 &
&
[100]
\\
j=1 &
&
[110]
\\
j=2 &
&
[101]
\\
j=3 &
&
[011]
\\
&\ldots&
\end{array}
\,\,,
\end{align*}
$[101]$ is the state of the size $L = 3$ lattice at the instance of time $j=2$, given that the initial condition at $j=0$ was $[100]$.
\begin{definition}
Let $_{L}N_{\text{occupied}}$-type state indicate a configuration that features $N_{\text{occupied}}$ total occupied sites on a length $L$ lattice. 
\end{definition}
E.g. $[110]$ will be characterised as a configuration of a $_{3}2$-type.

\subsection{Useful lemmas}

In this section we present some lemmas which will be used to prove later statements about the time evolution of the Rule 60 automaton.
\begin{lemma}
For any $L$, there are as many $_{L}(\text{\rm even})$-type as there are $_{L}(\text{\rm odd})$-type states.
\label{lem:total_even=total=odd}
\end{lemma}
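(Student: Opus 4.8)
The plan is to exhibit an explicit involution on the set of all length-$L$ configurations that toggles the parity of the number of occupied sites; such a map restricts to a bijection between the $_{L}(\text{even})$-type states and the $_{L}(\text{odd})$-type states, which immediately gives the claimed equality of cardinalities. Concretely, for $L \geq 1$ I would define $\phi$ on a configuration $\vec{x} = [x_{0}, x_{1}, \ldots, x_{L-1}]$ by flipping the zeroth bit, $\phi(\vec{x}) = [\,\overline{x_{0}}, x_{1}, \ldots, x_{L-1}\,]$, where $\overline{0}=1$ and $\overline{1}=0$.

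The key steps are then: (i) observe that $\phi$ is its own inverse, hence a bijection from the full configuration space to itself; (ii) note that $N_{\text{occupied}}(\phi(\vec{x})) = N_{\text{occupied}}(\vec{x}) \pm 1$, so $\phi$ sends even-type configurations to odd-type configurations and vice versa; (iii) conclude that $\phi$ restricts to a bijection from the set of $_{L}(\text{even})$-type states onto the set of $_{L}(\text{odd})$-type states, so the two sets have the same size (each equal to $2^{L-1}$). As a sanity check one can also verify the count directly via the binomial identity $\sum_{k \text{ even}} \binom{L}{k} = \sum_{k \text{ odd}} \binom{L}{k}$, which follows from expanding $(1-1)^{L} = 0$; I would mention this as an alternative but keep the involution as the main argument, since it is the cleanest and makes the structure transparent.

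There is essentially no hard part here; the only thing to be careful about is the trivial edge case $L = 0$ (where the bijection is vacuous/undefined and the empty configuration is the unique, even-type, state), so I would either assume $L \geq 1$ throughout — consistent with the rest of the paper, where $L$ is a Mersenne number — or simply state the lemma for $L \geq 1$. No earlier results from the excerpt are needed.
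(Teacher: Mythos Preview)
Your argument is correct and matches the paper's own proof essentially verbatim: the paper also establishes the bijection by flipping a single fixed bit (``controlling the state of the first bit, leaving other bits unchanged''). Your write-up is simply more detailed, and the binomial sanity check and $L=0$ remark are harmless extras.
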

\begin{proof}
One can establish a one-to-one correspondence between $_{L}(\text{\rm even})$ and $_{L}(\text{\rm odd})$ states simply by controlling the state of the first bit, leaving other bits unchanged. 
\end{proof}
\begin{lemma}
For any $L$, the state $[00\ldots 0]$ is a stationary state (a $1$-periodic trajectory). 
\label{lem:all_0s}
\end{lemma}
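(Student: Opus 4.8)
The plan is to verify the claim by a direct one-step application of the Rule 60 update \eqref{rule_60} to the candidate configuration $\vec{x}_{0} = [00\ldots 0]$, i.e.\ the state with $x_{i,0} = 0$ for every $i = 0, 1, \ldots, L-1$. Since the only nontrivial input to the rule is the pair of bits being combined, and here every bit is $0$, I would simply evaluate $\text{Xor}$ on the relevant pairs.

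Concretely, I would first handle the bulk sites: for each $i = 1, 2, \ldots, L-1$ we have $x_{i,1} = \text{Xor}(x_{i-1,0},\, x_{i,0}) = \text{Xor}(0,0) = 0$. Then I would treat the wrap-around site $i=0$ separately, as in \eqref{rule_60}: $x_{0,1} = \text{Xor}(x_{L-1,0},\, x_{0,0}) = \text{Xor}(0,0) = 0$. Collecting these, $\vec{x}_{1} = [00\ldots 0] = \vec{x}_{0}$.

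Finally, I would invoke Definition~\ref{def:period} with $M = 1$: since $\vec{x}_{1} = \vec{x}_{0}$, the initial condition $[00\ldots 0]$ initiates a $1$-periodic trajectory, which is exactly the assertion that it is a stationary state. There is no real obstacle here; the only thing to be careful about is to check both cases of the update rule in \eqref{rule_60} (the generic sites and the periodic-boundary site $i=0$) so that the verification is complete for all $L$, including small $L$ where the wrap-around matters most.
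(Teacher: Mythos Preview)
Your proposal is correct and follows exactly the approach the paper takes: the paper's proof is simply the one-line remark that the claim ``follows directly from the rule \eqref{rule_60},'' and you have spelled out that direct verification explicitly. There is nothing to add or correct.
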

\begin{proof}
This directly follows from the rule \eqref{rule_60}.
\end{proof}
\begin{lemma}
For any ring of circumference $L$, both $_{L}(\text{\rm odd})$-type and 
$_{L}(\text{\rm even})$-type states become an $_{L}(\text{\rm even})$-type state after one time step: 
\begin{align*}
&
_{L}(\text{\rm odd}) \to _{L}(\text{\rm even})
\\
&
_{L}(\text{\rm even}) \to _{L}(\text{\rm even})
\,\,.
\end{align*}
\label{lem:odd_and_even_to_even}
\end{lemma}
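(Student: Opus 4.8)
The plan is to compute the parity of the number of occupied sites of $\vec{x}_{j+1}$ directly from the rule \eqref{rule_60}, working over the field with two elements so that $\text{Xor}(a,b)=a+b \bmod 2$. For any configuration $\vec{x}_{j}$ the number of occupied sites is exactly the integer $\sum_{i=0}^{L-1} x_{i,j}$, so this number is even precisely when $\sum_{i=0}^{L-1} x_{i,j}\equiv 0 \pmod 2$. Hence it suffices to show that $\sum_{i=0}^{L-1} x_{i,j+1}\equiv 0 \pmod 2$, \emph{regardless} of the value of $\sum_{i=0}^{L-1} x_{i,j}$; this single statement simultaneously handles the $_{L}(\text{even})\to{}_{L}(\text{even})$ and $_{L}(\text{odd})\to{}_{L}(\text{even})$ cases.

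First I would note that, under the periodic boundary condition, the two lines of \eqref{rule_60} are one statement, $x_{i,j+1}=x_{i-1,j}+x_{i,j}$ with the index $i-1$ read modulo $L$ (so that $i=0$ uses $x_{L-1,j}$). Substituting this into the sum gives
\[
\sum_{i=0}^{L-1} x_{i,j+1}\;\equiv\;\sum_{i=0}^{L-1}\bigl(x_{i-1,j}+x_{i,j}\bigr)\;=\;\sum_{i=0}^{L-1} x_{i-1,j}\;+\;\sum_{i=0}^{L-1} x_{i,j}\pmod 2.
\]
The first sum on the right is just a cyclic relabeling of the summation index and therefore equals $\sum_{i=0}^{L-1} x_{i,j}$. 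Thus $\sum_{i=0}^{L-1} x_{i,j+1}\equiv 2\sum_{i=0}^{L-1} x_{i,j}\equiv 0 \pmod 2$, so $\vec{x}_{j+1}$ has an even number of occupied sites. Since nothing in this computation used the parity of $\vec{x}_{j}$, both asserted transitions follow at once.

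There is essentially no obstacle in this argument; the only point deserving a line of care is the bookkeeping of the boundary term, namely confirming that the $i=0$ update is indeed the cyclic continuation of the generic update, so that the index-relabeling step applies uniformly for every $L$ and no special case survives.
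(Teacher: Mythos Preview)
Your argument is correct and complete. It differs from the paper's own proof, which is more combinatorial: the paper observes that each $1$ in $\vec{x}_{j+1}$ marks a lateral ``switch'' in $\vec{x}_{j}$ (a place where $x_{i-1,j}\neq x_{i,j}$), and then argues that on a ring the number of such switches is necessarily even. Your route replaces this picture by a direct parity computation over $\mathbb{F}_{2}$: summing the update rule, using the cyclic relabeling to see the two sums coincide, and concluding that the total is $2\sum_{i} x_{i,j}\equiv 0$. What you gain is brevity and a proof that is manifestly independent of any interpretation of the bits; what the paper's version gains is a visual explanation (the $1$'s at time $j+1$ are domain boundaries at time $j$), which feeds naturally into their later domain-counting lemmas. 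The two arguments are really the same identity read two ways, since ``number of switches is even on a ring'' is precisely the statement $\sum_{i}(x_{i-1,j}+x_{i,j})\equiv 0\pmod 2$.
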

\begin{proof} One proof is presented in \cite{martin1984_219} as Lemma 3.1. We present an alternative proof here.
Assume that the state of the system at an instance $j$ is 
\[
[x_{0,j},\,x_{1,j},\,\ldots\,\,x_{i,j},\,\ldots,\,x_{L-1,j}]
\,\,.\]
According to the rule \eqref{rule_60}, at the next step, $1$'s will be marking points where 
the site state changes from $0$ to $1$ or vice versa, along the lattice. For example, in 
\begin{align*}
\begin{array}{lll}
\vec{x}_{j} &=&    [011\bm{01}1001]
\\
\vec{x}_{j+1} &=& [1101\bm{1}0101]
\end{array}
\,\,,
\end{align*}
the boldface ``$1$'' at $j+1$ marks a lateral switch from ``$0$'' to ``$1$'', in bold, at $j$. Likewise in 
\begin{align*}
\begin{array}{lll}
\vec{x}_{j} &=&    [\bm{0}1101100\bm{1}]
\\
\vec{x}_{j+1} &=& [\bm{1}10110101]
\end{array}
\,\,,
\end{align*}
the boldface ``$1$'' at $j+1$ marks a lateral switch from ``$1$'' to ``$0$'' (in boldface), as the ring closes. 

Observe now that on a ring, the site state can only change laterally an \emph{even} number of times. This proves the Lemma.
\end{proof}
\begin{corollary}
Type $_{L}(\text{\rm odd})$ can not initiate a periodic motion.
\label{cor:no_periods_for_odd}
\end{corollary}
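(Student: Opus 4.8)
The plan is to derive a contradiction from the parity dichotomy established in Lemma~\ref{lem:odd_and_even_to_even}. Suppose, for contradiction, that some $_{L}(\text{\rm odd})$-type initial condition $\vec{x}_{0}$ initiates an $M$-periodic trajectory for some $M\geq 1$, so that $\vec{x}_{M}=\vec{x}_{0}$.

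First I would invoke Lemma~\ref{lem:odd_and_even_to_even} once to conclude that $\vec{x}_{1}$ is of $_{L}(\text{\rm even})$-type, since any state becomes even-type after one step. Then I would apply the same lemma inductively: once the state is even-type it stays even-type, so $\vec{x}_{j}$ is of $_{L}(\text{\rm even})$-type for every $j\geq 1$. In particular, because $M\geq 1$, the state $\vec{x}_{M}$ is of $_{L}(\text{\rm even})$-type.

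Finally I would combine this with the periodicity hypothesis: $\vec{x}_{0}=\vec{x}_{M}$ would then be simultaneously of $_{L}(\text{\rm odd})$-type (by assumption) and of $_{L}(\text{\rm even})$-type (by the previous step). Since a fixed configuration has a fixed number $N_{\text{occupied}}$ of occupied sites, it cannot be both even- and odd-type, a contradiction. Hence no $_{L}(\text{\rm odd})$-type state can initiate a periodic trajectory.

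I do not anticipate a genuine obstacle here; the only point requiring mild care is the bookkeeping that the trajectory must close up after a \emph{positive} number of steps (so that the ``after one time step'' conclusion of Lemma~\ref{lem:odd_and_even_to_even} genuinely applies to $\vec{x}_{M}$), which is implicit in the notion of an $M$-periodic trajectory in Definition~\ref{def:period}.
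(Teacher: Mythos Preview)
Your proposal is correct and follows essentially the same approach as the paper: both use the parity conclusion of Lemma~\ref{lem:odd_and_even_to_even} to argue that after any positive number of steps the state is of $_{L}(\text{\rm even})$-type, contradicting a return to an $_{L}(\text{\rm odd})$-type initial state. Your write-up is slightly more explicit about the inductive step and the need for $M\geq 1$, but the underlying argument is identical.
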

\begin{proof}
For an $M$-periodic motion, the state must return to itself after $M$ steps. But according to the 
Lemma \ref{lem:odd_and_even_to_even}, an $_{L}(\text{\rm odd})$-type state can never become 
an $_{L}(\text{\rm odd})$-type state.  
\end{proof}
\begin{lemma}
Let $M_{\text{\rm max}}$ be the maximum fundamental period for a given $L$. Let $M_{\mathcal{P}}$ be the period emerging from trajectories generated by the $_L 1$ initial state. Then 
\[M_{\text{\rm max}} = M_{\mathcal{P}} \,\,.\] Other periods divide $M_{\text{\rm max}}$.
\label{lem:maximal_M_and_shorter_periods}
\end{lemma}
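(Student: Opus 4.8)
The plan is to use the linearity of Rule 60 over $\mathbb{F}_2$. Write the update in \eqref{rule_60} as $\vec{x}_{j+1} = T\vec{x}_{j}$, where $T$ is the $L\times L$ circulant matrix $I + S$ over $\mathbb{F}_2$, with $S$ the cyclic shift. Then $\vec{x}_{j} = T^{j}\vec{x}_{0}$, and an initial condition $\vec{x}_0$ has period $M$ (in the sense of Definition \ref{def:period}) exactly when $(T^{M} - I)\vec{x}_0 = 0$, i.e.\ when $\vec{x}_0 \in \ker(T^{M}-I)$. The fundamental period of $\vec{x}_0$ is the least such $M\ge 1$. Note that since $T = I+S$ and $S^{L}=I$, the matrix $T$ is invertible (indeed $T^{L} = (I+S)^{L}$; over $\mathbb{F}_2$ one checks $T$ has nonzero determinant because $I+S$ corresponds to the polynomial $1+t$ which is coprime to $t^{L}-1$ when... — more simply, every configuration has a unique preimage, so $T$ is a bijection), hence every orbit is purely periodic and every configuration lies on a cycle.

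First I would establish that the set of valid periods $M$ — meaning integers for which there exists some nonzero $\vec{x}_0$ with $T^{M}\vec{x}_0 = \vec{x}_0$ — is exactly the set of multiples of the multiplicative order of $T$ in $GL_L(\mathbb{F}_2)$; call this order $\mathrm{ord}(T)$. Indeed, if $T^{M}$ fixes some nonzero vector then... actually the cleaner statement: the fundamental period of any $\vec{x}_0$ divides $\mathrm{ord}(T)$, because $T^{\mathrm{ord}(T)} = I$ fixes everything; and conversely, I claim $M_{\text{max}} = \mathrm{ord}(T)$. For the latter, I would use the structure of $T$ as a single companion-type block or, more elementarily, exhibit a specific vector whose period equals $\mathrm{ord}(T)$: the key computation is that $T - I = S$, and more generally the sequence of powers $T^{j}$ applied to the unit vector $e_0$ (which is the $_L 1$ initial state) generates, as $j$ ranges, a set of vectors whose "return time" is precisely the degree-related quantity $\mathrm{ord}(T)$. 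Concretely, $\vec{x}_0 = e_0$ (the $_L 1$-type state) returns to itself after $M$ steps iff $T^{M}e_0 = e_0$ iff $(T^{M}-I)e_0 = 0$; since the vectors $\{e_0, Te_0, T^2 e_0, \ldots\}$ span the whole cyclic space generated by $e_0$ under $T$, and since $e_0$ is a cyclic vector for $T$ (because $T^{j}e_0 = (I+S)^j e_0$ has as its support the binomial-coefficient pattern, and these patterns for $j=0,1,\ldots,L-1$ are clearly linearly independent — they are "staircase" vectors), the order of $T$ on this cyclic subspace equals $\mathrm{ord}(T)$ on all of $\mathbb{F}_2^L$. Hence $M_{\mathcal{P}} = \mathrm{ord}(T) = M_{\text{max}}$, and "other periods divide $M_{\text{max}}$" follows immediately since any $\vec{x}_0$ has $T^{M_{\text{max}}}\vec{x}_0 = \vec{x}_0$, so its fundamental period divides $M_{\text{max}}$ by the standard argument (if $m_0$ is the fundamental period and $m_0 \nmid M_{\text{max}}$, the remainder $M_{\text{max}} \bmod m_0$ would be a smaller period, contradiction).

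The main obstacle is the claim that $e_0$, the $_L 1$-type initial state, is a \emph{cyclic vector} for $T$ — equivalently that its period is maximal among all initial conditions. The slick justification: the vectors $T^{j}e_0$ for $j = 0, 1, \ldots, L-1$ are the rows of Pascal's triangle mod $2$, read cyclically, and the $j$-th such vector has its first nonzero entry (reading from position $0$) at position... in fact $T^{j}e_0$ is supported within positions $\{0,\ldots,j\}$ with a $1$ at position $j$ (since $(I+S)^j e_0 = \sum_{k=0}^{j}\binom{j}{k}e_k$ before wraparound, and $\binom{j}{j}=1$), so these $L$ vectors are linearly independent and span $\mathbb{F}_2^L$; therefore the $T$-cyclic subspace generated by $e_0$ is everything, which forces the minimal polynomial of $T$ to equal its characteristic polynomial and, crucially, forces $\mathrm{ord}_T(e_0)$ — the least $M$ with $T^M e_0 = e_0$ — to equal $\mathrm{ord}(T)$. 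I would spell this last implication out carefully (if $T^M$ fixes a cyclic vector, $T^M$ fixes the whole space it generates, hence $T^M = I$), as it is the crux of the lemma.
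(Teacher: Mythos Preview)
Your argument has a genuine gap at its foundation: the matrix $T = I + S$ is \emph{never} invertible over $\mathbb{F}_2$. The all-ones vector $[1,1,\ldots,1]$ lies in $\ker T$ (since $\text{Xor}(1,1)=0$ at every site); equivalently, $1+t$ always divides $t^{L}-1$ over $\mathbb{F}_2$ because $t=1$ is a root of the latter. The paper itself reflects this: Lemma~\ref{lem:two_precurosors} shows that every even state has \emph{two} preimages, so preimages are not unique, and Corollary~\ref{cor:no_periods_for_odd} shows that an odd-occupation state---in particular your $e_0$---can never return to itself. Consequently $T \notin GL_L(\mathbb{F}_2)$, the quantity $\mathrm{ord}(T)$ you invoke in that group does not exist, and the equation $T^{M} e_0 = e_0$ has no solution $M \ge 1$. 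The $M_{\mathcal{P}}$ in the lemma is the \emph{eventual} period of the trajectory launched from the $_L 1$ state (the length of the cycle it falls into after a preperiod), not a pure return time of $e_0$.

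Your cyclic-vector observation---that $e_0, Te_0, \ldots, T^{L-1}e_0$ are linearly independent and hence span $\mathbb{F}_2^{L}$---is correct and is the right engine for a proof, but the argument must be rerun in a setting where the dynamics is genuinely invertible (for instance, $T$ restricted to the even-occupation subspace when $L$ is odd, which excludes the kernel vector $\mathbf{1}$), and one must compare eventual periods rather than nonexistent pure periods. For the record, the paper does not supply its own proof of this lemma at all: it simply cites Theorem~1 of Ehrlich (1990) and Proposition~3.6 of Breuer (2007).
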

\begin{proof}
The Lemma \ref{lem:maximal_M_and_shorter_periods} appears in \cite{ehrlich1990_302} as Theorem 1. An alternative formulation related to cyclotomic polynomials is given in \cite{BREUER2007293}, Proposition 3.6.
\end{proof}
%

%
\begin{lemma}
Let k and n be positive integers. Then, the binomial coefficient $\binom{2^{n}}{k}$ is even for all $0<k<2^{n}$.
\label{lem:binimial_2^n}
\end{lemma}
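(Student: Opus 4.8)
The plan is to use Kummer's theorem (or, equivalently, Legendre's formula for the $2$-adic valuation of a factorial). Kummer's theorem states that the exponent of the prime $2$ in $\binom{a+b}{a}$ equals the number of carries that occur when $a$ and $b$ are added in base $2$. Applying this with $a+b = 2^{n}$, so that $b = 2^{n}-k$ and $a = k$, it suffices to show that adding $k$ and $2^{n}-k$ in binary produces at least one carry whenever $0 < k < 2^{n}$.

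The key observation is that $k + (2^{n}-k) = 2^{n}$, whose binary representation is a single $1$ in position $n$ followed by $n$ zeros. If the addition of $k$ and $2^{n}-k$ produced \emph{no} carries, then in each bit position the digits of $k$ and $2^{n}-k$ would sum to the corresponding digit of $2^{n}$ without carrying; since every digit of $2^{n}$ in positions $0,1,\ldots,n-1$ is $0$, this would force the bit of $k$ and the bit of $2^{n}-k$ to both be $0$ in each of those positions, making $k$ a multiple of $2^{n}$, hence $k=0$ or $k \ge 2^{n}$, contradicting $0<k<2^{n}$. Therefore at least one carry occurs, and by Kummer's theorem $\binom{2^{n}}{k}$ is divisible by $2$.

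Alternatively, and perhaps more self-containedly, I would use the recurrence $k\binom{2^{n}}{k} = 2^{n}\binom{2^{n}-1}{k-1}$, which gives $\binom{2^{n}}{k} = \frac{2^{n}}{k}\binom{2^{n}-1}{k-1}$. Writing $k = 2^{s} m$ with $m$ odd and $0 \le s < n$ (note $s<n$ since $k<2^{n}$), we get $\binom{2^{n}}{k} = \frac{2^{n-s}}{m}\binom{2^{n}-1}{k-1}$, and since $\binom{2^{n}}{k}$ is an integer and $m$ is odd, the factor $2^{n-s}$ with $n-s \ge 1$ divides $\binom{2^{n}}{k}$; hence it is even.

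The only mild obstacle is bookkeeping: one must be careful that $s < n$ (which holds precisely because $k < 2^{n}$) so that the surviving power of $2$ has a strictly positive exponent, and one must invoke that $m$ odd together with integrality of the binomial coefficient forces $m$ to cancel against the numerator rather than against the power of $2$. I expect to present the second argument as the main proof since it needs no external citation, with Kummer's theorem mentioned as the conceptual reason.
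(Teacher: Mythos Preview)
Your proposal is correct, and both of your arguments are valid. The paper, however, takes a genuinely different route: it proves by induction on $n$ that
\[
(x+1)^{2^{n}} = x^{2^{n}} + 1 + 2\sum_{k=1}^{2^{n}-1} a_{k}\,x^{k}
\]
for some integers $a_{k}$, with the induction step coming from squaring, $(x+1)^{2^{m+1}} = \bigl((x+1)^{2^{m}}\bigr)^{2}$, and checking that every cross term picks up a factor of $2$. This is essentially the ``freshman's dream'' $(x+1)^{2^{n}}\equiv x^{2^{n}}+1\pmod 2$ unwound into an explicit induction over the integers.

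Compared to the paper, your second argument via $k\binom{2^{n}}{k}=2^{n}\binom{2^{n}-1}{k-1}$ is considerably shorter and entirely self-contained; it isolates exactly the divisibility that is needed and nothing more. Your Kummer-theorem argument is also correct and in fact yields the stronger statement that the $2$-adic valuation of $\binom{2^{n}}{k}$ equals the number of carries, at the cost of quoting an external result. The paper's polynomial-identity proof, while longer, has the virtue of being phrased in the same language as its intended application: the later argument applies $(\hat{I}+\hat{S}_{L})^{2^{n}}$ and reduces modulo~$2$, so having established $(x+1)^{2^{n}}\equiv x^{2^{n}}+1\pmod 2$ directly makes that step transparent. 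Either of your proofs would serve equally well as a drop-in replacement.
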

Several proofs for this property are known, the following is perhaps the simplest one, adapted from \cite{schepler2019_3087725} with minor changes.
\begin{proof}
According to the binomial theorem, 
\[
(x+1)^{2^{n}} = \sum_{k=0}^{2^{n}} \binom{2^{n}}{k} x^{k}
\,\,.
\]
Therefore the lemma will be established if we prove that for all positive integer $n$ there is some integer $a_k$ for each $0<k<2^{n}$ such that
\begin{align}
(x+1)^{2^{n}} = x^{2^{n}} + 1 + \sum_{k=1}^{2^{n}-1} 2a_k\, x^{k}
\,\,.
\label{induction_identity}
\end{align}
We will prove this by induction on $n$. First we prove the base case where $n=1$, so that 
\[
(x+1)^{2^{n}} = (x+1)^{2} = x^2 + 1 + 2x
\,\,.
\]
which has the form of \eqref{induction_identity} with $a_1 = 1$.

The induction hypothesis is that this property holds for $n = m$, for a given $m \ge  1$, so that 
\[
(x+1)^{2^{m}} =  x^{2^{m}} + 1 + \sum_{k=1}^{2^{m}-1}2 a_k\, x^{k}
\,\,.
\]
We now prove the property holds for $m+1$:
\begin{align*}
(x+1)^{2^{m+1}} 
&= \left((x+1)^{2^{m}}\right)^2
\end{align*}
By the induction hypothesis this is 
\begin{align*}
\left((x+1)^{2^{m}}\right)^2 &=
\left(x^{2^{m}} + 1 + \sum_{k=1}^{2^{m}-1} 2 a_k\, x^{k}\right)^2
\\
&=
\left\{
\left(x^{2^{m}} + 1\right)^2 
\right\}
+
\left\{
2 \left(x^{2^{m}} + 1\right)\left(2\sum_{k=1}^{2^{m}-1} a_k\, x^{k}\right)
\right\}\\
& \qquad +
\left\{
\left(2\sum_{k=1}^{2^{m}-1} a_k\, x^{k}\right)^2
\right\}
\\
&=
x^{2^{m+1}} + 1 + 2\, x^{2^{m}} 
\\
& \quad 
+ 4 \left\{\left(x^{2^{m}} +1 \right) \sum_{k=1}^{2^{m}-1} a_k\, x^{k} +\left(\sum_{k=1}^{2^{m}-1} a_k\, x^{k}\right)^2\right\}
\end{align*}
The first two terms to the right of the final equality are the first two terms of \eqref{induction_identity} with $n =m+1$. The binomial theorem applied to $(x+1)^{2^{m+1}}$ produces these same two terms, and then a polynomial in $x$ with powers running from $1$ to $2^{m+1}-1$, inclusive. Because all terms in this particular remainder are multiplied either by two or by four, they are all explicitly even, and the resulting polynomial will be sums of even numbers, which must therefore also be even. So it is possible to find numbers $b_k$ such that
\[
(x+1)^{2^{m+1}} =  x^{2^{m+1}} + 1 + \sum_{k=1}^{2^{m+1}-1}2 b_k\, x^{k}
\,\,.
\]
In particular,
\[
 b_{k}=\begin{cases}
        2a_{1}, & \text{for $k=1$;}\\
        2a_{k}+\sum_{m=1}^{k-1}a_{k-m}a_{m}, & \text{for $k=2,\,\ldots,\,2^{m}-1$;}\\
        1+2\sum_{m=1}^{2^{m}-1}a_{2^m-m}a_{m},  & \text{for $k=2^m$;}\\
        2a_{k-2^m}+\sum_{m=k-2^m+1}^{2^m-1}a_{m}a_{k-m}, & \text{for $k=2^m + 1,\,\ldots,\,2^{m+1}-2$; and}\\
        2a_{2^m-1}, & \text{for $k=2^{m+1}-1$.}
       \end{cases}
\]

\end{proof}
%

%

\section{Fundamental periods and a propagator for the Rule 60 on a ring of a Mersenne number circumference}
\subsection{Fundamental periods}
\begin{definition}
 A domain is the maximal contiguous set of sites having the same occupation state bounded by a right domain boundary and a left domain boundary.
\end{definition}
\begin{definition}
    A site $x_{i,j}$ is said to be a right domain boundary (RDB) if the site $x_{i+1,j}$ has a different occupation than $x_{i,j}$. Similarly, a site $x_{i,j}$ is said to be a left domain boundary (LDB) if the site $x_{i-1,j}$ has a different occupation. If $x_{i,j}$ is both a RDB and a LDB, then it is a single-site domain. If $x_{i,j}$ is neither a RDB or a LDB, then it is an interior site of a domain.
\end{definition}
\begin{definition}
    A bit flip on $x_{i,j}$ is an operation which changes the occupation of that site, either from $1$ to $0$ or from $0$ to $1$, and leaves all other sites in $\vec{x}_j$ unchanged.
\end{definition}
\begin{definition}
    A maximal state is a state of length $L$ with the maximum number of domains.
\end{definition}
\begin{lemma}
Every ring of odd circumference $L$ has an even number of domains for all occupation states.
\label{lem:even_domains} 
\end{lemma}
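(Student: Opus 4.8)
The plan is to reduce the counting of domains to the counting of ``lateral switches'' as one traverses the ring, and then to reuse the parity observation already made in the proof of Lemma~\ref{lem:odd_and_even_to_even}. First I would note that every domain on the ring has a unique rightmost site, which by definition is a right domain boundary (RDB), and that distinct domains have distinct RDBs; conversely, every RDB is the rightmost site of exactly one domain. Hence the map sending a domain to its RDB is a bijection between the set of domains and the set of RDBs, so the number of domains equals the number of RDBs. Single-site domains, where the unique site is simultaneously an RDB and an LDB, cause no trouble here: such a domain simply maps to that one site, and to no other.

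Next I would observe that a site $x_{i,j}$ is an RDB precisely when $x_{i,j}\neq x_{i+1,j}$ (indices understood modulo $L$), i.e. precisely when a lateral switch occurs between positions $i$ and $i+1$. This identifies the RDBs with the positions around the ring at which the occupation changes, so that the number of domains equals the number of lateral switches encountered in one full loop around the ring.

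Finally, exactly as argued in the proof of Lemma~\ref{lem:odd_and_even_to_even}, in going once around a ring the occupation value must return to the value it started from, so the number of lateral switches is necessarily even. Combining the three steps, the number of domains is even for every occupation state. The uniform states $[00\ldots0]$ and $[11\ldots1]$ are consistent with this: they possess no domain boundaries at all, hence zero domains, and zero is even.

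I do not expect a genuine obstacle; the only point that needs care is the bookkeeping for the degenerate configurations — the uniform states (zero domains) and single-site domains (a site that is at once an RDB and an LDB) — so that the domain-to-RDB correspondence is seen to be genuinely one-to-one in those cases. I would also remark that nothing in the argument actually uses the oddness of $L$: the conclusion holds for every circumference, but only the odd case (circumferences of the form $2^{n}-1$) is needed in what follows.
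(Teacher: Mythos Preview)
Your argument is correct and takes a genuinely different route from the paper. The paper proves the lemma by first observing that a maximal state on an odd ring has $L-1$ domains (even), and then checking, through four cases, that a single bit flip changes the domain count by $0$ or $\pm 2$; since every state is reachable from a maximal one by successive bit flips, parity is preserved. Your proof instead sets up a bijection between domains and right domain boundaries, identifies RDBs with lateral switches, and invokes the ring-parity observation from the proof of Lemma~\ref{lem:odd_and_even_to_even} directly. This is shorter, avoids the case analysis, and---as you note---does not use the oddness of $L$ at all; in fact the paper itself remarks after its proof that the bit-flip argument can be run in the reverse direction to give an alternative proof of Lemma~\ref{lem:odd_and_even_to_even}, so you and the authors are exploiting the same underlying equivalence but traversing it in opposite directions. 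The one place where your argument leans on the paper's specific definition is the treatment of the uniform states: because a domain is required to be bounded by an RDB and an LDB, $[00\ldots0]$ and $[11\ldots1]$ count as having zero domains rather than one, which is exactly what keeps your bijection honest and the parity claim intact.
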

\begin{proof}
    Since $L$ is odd, it is not possible to have only single-site domains on a ring. For all states of odd $L$, at least one domain must contain at least two sites. For the maximal state, only one domain will contain two sites, all others will be a single-state domain. Thus with periodic boundary conditions, when $L$ is odd, the maximum number of domains is $L-1$, which is even.
    
    Any other state can be made from a maximal state by a series of bit flips on successive sites, up to $L$ sites. On each site there are four possible cases:\\

    Case 1: The site $x_{i,j}$ is a single-site domain. By definition, this means the occupation states of $x_{i-1,j}$ and $x_{i+1,j}$ are the same. Therefore when the $x_{i,j}$ site is bit flipped, its single-site domain will merge with the two domains to the right and to the left, reducing the total number of domains by two (three original domains become one). A maximal state has an even number of domains, and bit flipping any of the single site domains results in a net loss of two domains, so the total number of domains remains even.\\
    \\    
    
    Case 2: The site $x_{i,j}$ is an interior site of a domain. This is essentially the reverse of Case 1; when this site is bit flipped, it will then become a single site domain, with two new domains on either side. The original single domain is therefore divided into three new domains by the bit flip, for a net gain of two domains. Since the number of domains in a maximal state is even, if a site interior to a domain is bit flipped, two domains are added and the total remains even.\\
    \\

    Case 3: The site $x_{i,j}$ is a LDB, but not a RDB. By definition, since $x_{i,j}$ is a LDB but not a RDB it cannot be a single-site domain. Bit flipping this site will therefore shift the domain boundary to the right by one site, merging the $x_{i,j}$ site with the domain that previously had the RDB at $x_{i-1,j}$ and creating a new LDB at $x_{i+1,j}$. The total number of domains remains the same. Since the maximal state has an even number of domains, bit flipping a site which is a LDB but not a single-site domain results in the same even number of domains.\\
    \\

    Case 4: The site $x_{i,j}$ is a RDB, but not a LDB. Again by definition, this site cannot be a single-site domain. Similarly to Case 3, bit flipping this site will result in the domain boundary shifting left one site, making $x_{i,j}$ the RDB previously at $x_{i+1,j}$. The total number of domains again remains the same. Therefore all possible cases of bit flips from the maximal state result in a state with an even number of domains.
\end{proof}
\noindent
{\bf Remark}: The above proof can be extended to even $L$ to formulate an alternate proof of Lemma \ref{lem:odd_and_even_to_even}.
\begin{definition}
    A predecessor of $\vec{x}_{j+1}$ is any state $\vec{x}_{j}$ that evolves to $\vec{x}_{j+1}$ under the time evolution specified in \eqref{rule_60}
\end{definition}
\begin{lemma}
    A state on a ring of circumference $L$ has a predecessor if, and only if, it is even. 
    \label{lem:predecessor_exist}
\end{lemma}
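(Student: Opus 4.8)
The plan is to prove the two implications separately. The forward direction, that having a predecessor forces the state to be even, is immediate from Lemma~\ref{lem:odd_and_even_to_even}: if $\vec{x}_{j+1}$ is the image under \eqref{rule_60} of some $\vec{x}_{j}$, then that lemma says the image of any state, odd or even, is of $_{L}(\text{even})$-type, so $\vec{x}_{j+1}$ is even.

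For the converse I would argue constructively, treating \eqref{rule_60} as a linear system over $\mathbb{F}_2$. Write the target as $\vec{y} = \vec{x}_{j+1}$ and look for a predecessor $\vec{x} = \vec{x}_{j}$ satisfying $y_i = \text{Xor}(x_{i-1},x_i)$ for all $i$ (indices mod $L$). Choosing $x_0$ freely and unwinding the recurrence around the ring gives $x_i = x_0 \oplus y_1 \oplus y_2 \oplus \cdots \oplus y_i$ for $i = 1,\ldots,L-1$. The one equation not yet used is the wrap-around relation $y_0 = \text{Xor}(x_{L-1},x_0)$; substituting the formula for $x_{L-1}$ makes the $x_0$'s cancel and reduces it to $y_0 \oplus y_1 \oplus \cdots \oplus y_{L-1} = 0$, i.e.\ to the condition that the total number of occupied sites of $\vec{y}$ be even. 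Thus an even $\vec{y}$ admits a predecessor, in fact exactly two, one per choice of $x_0$, while an odd $\vec{y}$ admits none, which independently re-proves the forward direction.

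An alternative route, closer to the domain language of this section, uses the observation from the proof of Lemma~\ref{lem:odd_and_even_to_even} that the $1$'s of $\vec{y}$ are exactly the sites where a would-be predecessor $\vec{x}$ changes occupation as one traverses the ring. Those sites split the ring into one arc per $1$ in $\vec{y}$; on each arc $\vec{x}$ is constant and consecutive arcs carry opposite occupation, so the required alternating two-colouring of the arcs around the cycle exists precisely when their number, the number of $1$'s in $\vec{y}$, is even. The degenerate case $\vec{y} = [00\ldots0]$, with no $1$'s, should be recorded separately: its two predecessors are $[00\ldots0]$ and $[11\ldots1]$, since $\text{Xor}(1,1)=0$.

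The main obstacle, such as it is, is the wrap-around step: one must check that the unwound recurrence closes up consistently around the ring, and that the resulting consistency condition is exactly evenness of $N_{\text{occupied}}(\vec{y})$ and not some finer parity statement. The constant configurations ($[00\ldots0]$, and $[11\ldots1]$ when it is even) are worth verifying by hand to confirm they are not exceptions. Everything else is routine bookkeeping.
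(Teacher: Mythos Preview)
Your proof is correct and shares the paper's setup: both invoke Lemma~\ref{lem:odd_and_even_to_even} for the forward direction, then construct a predecessor by fixing $x_{0}$ freely, running an inverse recurrence around the ring, and checking the single wrap-around consistency condition.

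The execution differs in a way worth noting. The paper verifies that the consistency condition is equivalent to evenness by induction on $L$, starting from the base case $L=2$ and showing that appending one site preserves the claim. You instead write down the closed form $x_{i}=x_{0}\oplus y_{1}\oplus\cdots\oplus y_{i}$ and substitute into the wrap-around relation, whereupon $x_{0}$ cancels and the condition collapses to $\bigoplus_{i=0}^{L-1} y_{i}=0$. Your route is shorter and more transparent; it also yields as a by-product the ``exactly two predecessors, one per choice of $x_{0}$'' statement that the paper proves separately as Lemma~\ref{lem:two_precurosors}. Your second, domain-based argument has no analogue in the paper and gives an independent combinatorial explanation of the same parity obstruction.
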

\begin{proof}
    We have already shown in Lemma \ref{lem:odd_and_even_to_even} that all time evolution under Rule 60 results in even states, so no $_L(\text{odd})$ state has a predecessor. If a predecessor does exist, it must be constructible by the following inverse of Rule 60. Here we construct an $\vec{x}_j$ configuration from the knowledge of the $\vec{x}_{j+1}$ configuration. This inverse rule is 
\begin{align}
\begin{array}{lcc}
x_{0,j} = 1 &\\
x_{i,j} = \text{Xor}(x_{i-1,j},\,x_{i-1,j+1}) & \text{ for } & i=1,\,2,\,\ldots,\,L-1.
\end{array}
\label{predecessor_rule_60}
\end{align}
Because of the periodic boundary condition, there is also the self-consistency condition
\begin{equation}
    x_{0,j} = \text{Xor}(x_{L-1,j},\,x_{L-1,j+1}).
    \label{predecessor_rule_condition}
\end{equation}
Note that there is a freedom in the initial choice of $x_{0,j}$, we have chosen $1$ for this proof, but one can equally choose $0$ and the rest of the construction goes through as above. \\
\\
We will now prove that all even states have a predecessor by induction on $L$. \\
    \\
    For the base case, we take $L=2$ at some time $j+1$. There are four possible configurations:
    \begin{align*}
        &[00]\\
        &[01]\\
        &[10]\\
        &[11]
    \end{align*}
Applying the inverse rule \eqref{predecessor_rule_60}, we find:
\begin{center}
    $[00]$ has predecessor $[11]$\\
    $[11]$ has predecessor $[10]$\\
   \end{center}
The configurations $[01]$ and $[10]$ do not have predecessors as application of the inverse rule \eqref{predecessor_rule_60} cannot satisfy the self-consistency condition in both cases. Thus, we have proven by demonstration that all even states for $L=2$ have a predecessor state, and no odd states have a predecessor.

The induction hypothesis is that this holds for some $L=N$, so that given an $\vec{x}_{j+1}$ of length $N >2$
\begin{align*}
    x_{i,j} &= \text{Xor}(x_{i-1,j},\,x_{i-1,j+1}) & \text{ for }  i=1,\,2,\,\ldots,\,N-1\\
    & \text{Xor}(x_{N-1,j},\,x_{N-1,j+1})  = 1 &\text{for} _N(\text{even})\\
     & \text{Xor}(x_{N-1,j},\,x_{N-1,j+1})  = 0 &\text{for} _N(\text{odd})
\end{align*}
We now prove the property holds for $N+1$. There are two cases:\\
Case 1: The ring of circumference $N$ had an even occupation, so by the induction hypothesis
\begin{equation*}
    \text{Xor}(x_{N-1,j},\,x_{N-1,j+1}) = 1.
\end{equation*}
Extending the ring to $N+1$ means adding another site, $x_{N,j+1}$, which can be either occupied or unoccupied. If the site is occupied then the total occupation of $N+1$ is odd. By \eqref{predecessor_rule_60}
\begin{equation*}
  x_{N,j} = \text{Xor}(x_{N-1,j},\,x_{N-1,j+1}) = 1
\end{equation*}
but by the self-consistency condition \eqref{predecessor_rule_condition}
\begin{equation*}
    x_{0,j} = \text{Xor}(x_{N,j},\,x_{N,j+1}) = 0 
\end{equation*}
which is a contradiction, and so there can be no predecessor state. 

If $x_{N,j+1}$ is unoccupied, then again 
\begin{equation*}
  x_{N,j} = \text{Xor}(x_{N-1,j},\,x_{N-1,j+1}) = 1
\end{equation*}
but now 
\begin{equation*}
    x_{0,j} = \text{Xor}(x_{N,j},\,x_{N,j+1}) = \text{Xor}(1,0)= 1
\end{equation*}
and the predecessor state has been successfully constructed using rule \eqref{predecessor_rule_60}.

Case 2: The ring of circumference $N$ had an odd occupation, so by the induction hypothesis
\begin{equation*}
    \text{Xor}(x_{N-1,j},\,x_{N-1,j+1}) = 0.
\end{equation*}
If the additional site is occupied, so $x_{N,j+1}=1$ then the occupation of $N+1$ is now even, and 
\begin{equation*}
    x_{0,j} = \text{Xor}(x_{N,j},\,x_{N,j+1}) = \text{Xor}(1,0)= 1
\end{equation*}
and the predecessor state has been successfully constructed. 

If $x_{N,j+1}=0$, then the $N+1$ state remains odd and 
\begin{equation*}
    x_{0,j} = \text{Xor}(x_{N,j},\,x_{N,j+1}) = \text{Xor}(0,0)= 0
\end{equation*}
in contradiction to \eqref{predecessor_rule_condition} and a predecessor state can not be constructed. Therefore, the ring of circumference $N+1$ has a predecessor state only if the occupation is even.
\end{proof}
\begin{lemma}
    Every $_L(\text{even})$ state on a ring has two possible predecessors, which are related by bit flipping every site.
    \label{lem:two_precurosors}
\end{lemma}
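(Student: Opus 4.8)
The plan is to combine the existence statement of Lemma \ref{lem:predecessor_exist} with two elementary observations: complementing a predecessor produces another predecessor, and any two predecessors of a common state differ by a global complement. Together these pin the number of predecessors of an $_{L}(\text{even})$ state to exactly two.

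First I would invoke Lemma \ref{lem:predecessor_exist}, which already supplies at least one predecessor $\vec{x}_{j}$ of a given $_{L}(\text{even})$ state $\vec{x}_{j+1}$ via the inverse rule \eqref{predecessor_rule_60}; that proof also flagged the freedom to fix $x_{0,j}$ as either $0$ or $1$. I would make this freedom explicit as a bit flip. Let $\bar{\vec{x}}_{j}$ be the state with every site of $\vec{x}_{j}$ flipped, so $\bar{x}_{i,j} = \text{Xor}(x_{i,j},1)$. Using $\text{Xor}(\text{Xor}(a,1),\text{Xor}(b,1)) = \text{Xor}(a,b)$, one checks directly from \eqref{rule_60}, at every site including the wraparound site $i=0$, that $\bar{\vec{x}}_{j}$ evolves to the same $\vec{x}_{j+1}$. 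Hence $\bar{\vec{x}}_{j}$ is also a predecessor, and it is distinct from $\vec{x}_{j}$ because the two disagree at every site.

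Next I would show there are no others. Suppose $\vec{x}_{j}$ and $\vec{x}'_{j}$ both evolve to $\vec{x}_{j+1}$ under \eqref{rule_60}. Put $d_{i} = \text{Xor}(x_{i,j},x'_{i,j})$ and take the Xor of the corresponding pair of update equations site by site; since the two equations share the same left-hand side $x_{i,j+1}$, the $\vec{x}_{j+1}$ contributions cancel and one is left with $\text{Xor}(d_{i-1},d_{i}) = 0$, i.e.\ $d_{i-1} = d_{i}$, for every $i$ read modulo $L$. Therefore $d$ is constant around the ring: either $d_{i} \equiv 0$, whence $\vec{x}'_{j} = \vec{x}_{j}$, or $d_{i} \equiv 1$, whence $\vec{x}'_{j} = \bar{\vec{x}}_{j}$. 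Combined with the previous step, this shows every $_{L}(\text{even})$ state has exactly the two predecessors $\vec{x}_{j}$ and $\bar{\vec{x}}_{j}$, which proves the claim.

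I do not expect a genuine obstacle here; the only point requiring a little care is to use the periodic equation $x_{0,j+1} = \text{Xor}(x_{L-1,j},x_{0,j})$ alongside the generic ones, both when verifying that complementation commutes with Rule 60 and when deriving $d_{i-1} = d_{i}$, so that ``$d$ is constant on the whole ring'' genuinely follows rather than merely ``$d_{i}$ equals $d_{0}$ for all $i$''.
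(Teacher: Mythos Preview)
Your proof is correct and follows essentially the same route as the paper: invoke Lemma~\ref{lem:predecessor_exist} for existence, observe that global complementation preserves the outcome of Rule~60 to produce a second predecessor, and then argue there are no others. The only cosmetic difference is in the uniqueness step: the paper says the single free choice of $x_{0,j}$ in the inverse rule~\eqref{predecessor_rule_60} is the only freedom, while you make this explicit by showing the sitewise difference $d_i$ of two predecessors satisfies $d_{i-1}=d_i$ around the ring and hence is constant---a slightly more self-contained formulation of the same fact.
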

\begin{proof}
    In Lemma \ref{lem:predecessor_exist} we demonstrated that every $_L(\text{even})$ state has a predecessor, but noted that there is a freedom in choosing the occupation of the initial site of the predecessor state. Since the Xor operation only considers relationships between pairs of sites, $x_{i,j}$ and $x_{i-1,j}$, bit flipping every site preserves the outcome of the Xor operation and generates a second predecessor. Since this is the only freedom in constructing the predecessor states, these two states are the only predecessor states.
\end{proof}
\begin{lemma}
On a ring of odd circumference $L$, every $_{L}(\text{even})$ state has one 
$_{L}(\text{odd})$ precursor.  
\label{lem:odd_precursor}
\end{lemma}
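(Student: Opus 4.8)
The plan is to deduce this immediately from Lemma~\ref{lem:two_precurosors} together with a parity count. By Lemma~\ref{lem:predecessor_exist}, an $_{L}(\text{even})$ state $\vec{x}_{j+1}$ does have a predecessor, and by Lemma~\ref{lem:two_precurosors} it has exactly two predecessors, say $\vec{x}_{j}$ and $\vec{x}_{j}'$, where $\vec{x}_{j}'$ is obtained from $\vec{x}_{j}$ by flipping every one of the $L$ sites. These two are genuinely distinct states (flipping all $L\geq 1$ bits never returns the same string), so the predecessor set has size exactly two.

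The key step is the parity observation: if $\vec{x}_{j}$ is of type $_{L}N_{\text{occupied}}$, then its global bit-flip $\vec{x}_{j}'$ is of type $_{L}(L-N_{\text{occupied}})$. Since $L$ is odd, $N_{\text{occupied}}$ and $L-N_{\text{occupied}}$ have opposite parities; hence exactly one of $\vec{x}_{j}$, $\vec{x}_{j}'$ is an $_{L}(\text{even})$-type state and the other is an $_{L}(\text{odd})$-type state. Therefore among the two predecessors of $\vec{x}_{j+1}$ there is precisely one $_{L}(\text{odd})$ precursor, which is what the Lemma asserts.

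I expect no real obstacle here: the argument is a one-line parity count once Lemmas~\ref{lem:predecessor_exist} and~\ref{lem:two_precurosors} are in hand. The only minor points to state carefully are (i) that ``precursor'' is synonymous with ``predecessor'' as defined earlier, and (ii) that the two predecessors guaranteed by Lemma~\ref{lem:two_precurosors} are distinct, so that ``one of the two is odd'' really yields exactly one odd precursor rather than possibly zero or two. Both are trivial, the second because a nonempty binary string of any length differs from its complement.
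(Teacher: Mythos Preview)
Your proof is correct and follows essentially the same route as the paper: construct a predecessor via Lemma~\ref{lem:predecessor_exist}, invoke Lemma~\ref{lem:two_precurosors} to obtain the second predecessor by a global bit-flip, and then observe that on a ring of odd circumference the two complements have opposite occupation parity, so exactly one of them is $_{L}(\text{odd})$. Your write-up is in fact slightly more careful than the paper's, since you explicitly note the distinctness of the two predecessors and hence that there is \emph{exactly} one odd precursor.
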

\begin{proof}
Construct an $\vec{x}_{j}$ state from a given $\vec{x}_{j+1}$ by applying \eqref{predecessor_rule_60}. If the occupation is odd, then the proof is complete. If the occupation is even, then by Lemma \ref{lem:two_precurosors} the second state can be found by bit flipping every site of the even state. Since $L$ is odd, this flipped state will be odd.
\label{odd_precursor}
\end{proof}
\begin{definition}
    A Mersenne number, $\mathcal{N}$ is any number where $\mathcal{N}= 2^n-1$ for some integer $n >1$.
\end{definition}
\begin{theorem}
For a ring whose circumference is a Mersenne number, $\mathcal{N}$, every $_{\mathcal{N}}(\text{even})$-type initial state initiates an $L$-periodic trajectory. 
\label{th:2^n-1}
\end{theorem}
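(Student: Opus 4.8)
The plan is to show that on a ring whose circumference is $\mathcal{N}=2^{n}-1$, after $2^{n}$ steps every trajectory returns to its state after a single step, $\vec{x}_{2^{n}}=\vec{x}_{1}$, and then to ``cancel'' that single step using the oddness of $\mathcal{N}$ together with the predecessor lemmas. For the first part I would use the algebraic content of Lemma~\ref{lem:binimial_2^n}: encoding a configuration as the polynomial $\sum_{i}x_{i}t^{i}$ over $\mathbb{F}_{2}$, one step of Rule~\eqref{rule_60} is multiplication by $1+t$ modulo $t^{L}-1$, and since $\binom{2^{k}}{m}$ is even for $0<m<2^{k}$ one has $(1+t)^{2^{k}}=1+t^{2^{k}}$; equivalently, iterating Rule~\eqref{rule_60} $2^{k}$ times sends $x_{i,0}$ to $\text{Xor}(x_{i-2^{k},0},\,x_{i,0})$, with all site indices read modulo $L$. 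Taking $k=n$ and $L=\mathcal{N}$, the shift by $2^{n}$ is the shift by $2^{n}\bmod\mathcal{N}=1$, so $x_{i,2^{n}}=\text{Xor}(x_{i-1,0},\,x_{i,0})=x_{i,1}$ for every $i$; hence $\vec{x}_{2^{n}}=\vec{x}_{1}$, and therefore $\vec{x}_{j+\mathcal{N}}=\vec{x}_{j}$ for every $j\ge 1$, for an \emph{arbitrary} initial configuration.

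To pass from $\vec{x}_{\mathcal{N}+1}=\vec{x}_{1}$ to $\vec{x}_{\mathcal{N}}=\vec{x}_{0}$ under the hypothesis that $\vec{x}_{0}$ is of $_{\mathcal{N}}(\text{even})$-type, I would observe that $\vec{x}_{1}$ has both $\vec{x}_{0}$ and $\vec{x}_{\mathcal{N}}$ as predecessors: $\vec{x}_{\mathcal{N}}$ is even because it is the image of one Rule~60 step (Lemma~\ref{lem:odd_and_even_to_even}), and $\vec{x}_{0}$ is even by hypothesis. By Lemma~\ref{lem:two_precurosors}, $\vec{x}_{1}$ has exactly two predecessors, differing by a bit flip of every site; since $\mathcal{N}$ is odd, such a global flip reverses the parity of the occupation number, so exactly one of those two predecessors is even (this is essentially Lemma~\ref{lem:odd_precursor}). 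Two even predecessors of the same state must therefore coincide, giving $\vec{x}_{\mathcal{N}}=\vec{x}_{0}$, which is exactly the claim that the trajectory is $L$-periodic.

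The only step carrying real content is the first one, and its engine is Lemma~\ref{lem:binimial_2^n}; everything afterwards is bookkeeping. The point that needs care is that Rule~60 is \emph{not} injective on the set of all configurations — each even state has two predecessors — so the cancellation $\vec{x}_{\mathcal{N}+1}=\vec{x}_{1}\Rightarrow\vec{x}_{\mathcal{N}}=\vec{x}_{0}$ is not automatic; it works only because we restrict to even states, on which the one-step map is injective, and that injectivity is precisely what Lemmas~\ref{lem:two_precurosors} and~\ref{lem:odd_precursor} together with the oddness of $\mathcal{N}$ supply. Equivalently, one may run the whole argument inside $\mathbb{F}_{2}[t]/(t^{\mathcal{N}}-1)$: an even $\vec{x}_{0}$ is $(1+t)Q$ for a polynomial $Q$, and reducing $(1+t)^{2^{n}}=1+t^{2^{n}}$ modulo $t^{\mathcal{N}}-1$ gives $(1+t)\bigl[(1+t)^{\mathcal{N}}-1\bigr]\equiv 0$, whence $(1+t)^{\mathcal{N}}(1+t)Q\equiv(1+t)Q$, i.e.\ $\vec{x}_{\mathcal{N}}=\vec{x}_{0}$, directly.
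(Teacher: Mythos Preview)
Your argument is correct and follows essentially the same route as the paper: both encode one step of Rule~60 as multiplication by $1+t$ (the paper writes this as the operator $\hat{I}+\hat{S}_{L}$ acting $\bmod\ 2$, you work in $\mathbb{F}_{2}[t]/(t^{L}-1)$), invoke Lemma~\ref{lem:binimial_2^n} to collapse $(1+t)^{2^{n}}$ to $1+t^{2^{n}}$, and then reduce the shift exponent modulo $L=2^{n}-1$ to obtain $\vec{x}_{L+1}=\vec{x}_{1}$.

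The one substantive difference is your handling of the last step. The paper, having reached $\vec{x}_{L+1}=\vec{x}_{1}$, simply declares this an $L$-periodic trajectory and appeals to Corollary~\ref{cor:no_periods_for_odd}; it does not spell out why $\vec{x}_{L+1}=\vec{x}_{1}$ forces $\vec{x}_{L}=\vec{x}_{0}$ for the given even $\vec{x}_{0}$. You close this gap explicitly, using Lemmas~\ref{lem:two_precurosors} and~\ref{lem:odd_precursor} together with the oddness of $\mathcal{N}$ to argue that the one-step map is injective on even states, so that the two even predecessors $\vec{x}_{0}$ and $\vec{x}_{\mathcal{N}}$ of $\vec{x}_{1}$ must coincide. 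Your alternative polynomial version, writing an even state as $(1+t)Q$ and cancelling the factor $(1+t)$ from $(1+t)\bigl[(1+t)^{\mathcal{N}}-1\bigr]\equiv 0$, is the same injectivity argument in algebraic dress. In this respect your write-up is tighter than the paper's own proof.
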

\noindent
{\bf Remark}: The proof that follows is not new. The most elegant one can be found in 
\cite{ehrlich1990_302} Lemma 1, Corollary 1. Our proof differs in several details from the one in \cite{ehrlich1990_302}. In particular, it does not rely on the properties of the propagators, and as such is technically close to the one given in \cite{sfairopoulos2023_174107} for $L=2^n$. 
\noindent
\begin{proof}

Let $\hat{I}$ be the identity matrix of appropriate dimension. Let $\hat{S}_{L}$ be a left shift operator for a ring of circumference $L$ such that
\begin{align*}
\begin{array}{lcc}
(\hat{S}_{L} \vec{x})_{i} = (\vec{x})_{i-1} & \text{ for } & i=1,\,2,\,\ldots,\,L-1
\\
(\hat{S}_{L} \vec{x})_{0} = (\vec{x})_{L-1} &\text{for} &i=0. \qquad \qquad \qquad
\end{array}
\end{align*}
The shift operator has the property:
\begin{align}
\left(\hat{S}_{L}\right)^{L} = \left(\hat{S}_{L}\right)^{0} = \hat{I}
\,\,.
\label{S_L_to_L}
\end{align}

The state of the system at an instance of time $j$ given by the vector $\vec{x}_{j}$. Observe that
\begin{align}
\vec{x}_{j} \stackrel{\mod 2}{=} (\hat{I}+\hat{S}_{L})^{j} \vec{x}_{0}
\,\,,
\label{time_shift}
\end{align}

Now, consider $L=\mathcal{N}=2^{n}-1$. Apply Rule 60 to some initial state $\vec{x}_{0}$ $L+1=2^n$ times. By \eqref{time_shift} the result can be expressed as 

\begin{align*}
\vec{x}_{L+1} \stackrel{\mod 2}{=} (\hat{I}+\hat{S}_{L})^{2^n} \vec{x}_{0}
\,\,.
\end{align*}
By the binomial theorem, the quantity $(\hat{I}+\hat{S}_{L})^{2^n}$ can be expanded as

\begin{align*}
\vec{x}_{L+1} \stackrel{\mod 2}{=} \left( \sum_{k=0}^{2^n}\binom{2^n}{k}\hat{I}^{2^n-k} \hat{S}_{L}^k\right) \vec{x}_{0}
\,\,.
\end{align*}
Because  $\hat{I}$ is the identity, $\hat{I}^{2^n-k} =\hat{I}$ for all $k$ and can be factored out of the sum: 
\begin{align*}
\vec{x}_{L+1} \stackrel{\mod 2}{=} \left( \hat{I}\sum_{k=0}^{2^n} \binom{2^n}{k}\hat{S}_{L}^k\right) \vec{x}_{0} = \left(\sum_{k=0}^{2^n} \binom{2^n}{k}\hat{S}_{L}^k\right) \vec{x}_{0}
\,\,.
\end{align*}
When $k=0$ no shift has been applied, and by \eqref{S_L_to_L} $\left(\hat{S}_{L}\right)^{0} = \hat{I}$, so this first term can be explicitly evaluated and removed from the sum:
\begin{align*}
\vec{x}_{L+1} \stackrel{\mod 2}{=}  \left( \hat{I}+ \sum_{k=1}^{2^n} \binom{2^n}{k}\hat{S}_{L}^k\right) \vec{x}_{0}
\,\,.
\end{align*}

By Lemma \ref{lem:binimial_2^n} $\binom{2^n}{k}$ is even for all $0< k < 2^n$, and so all coefficients except $\binom{2^n}{2^n}$ are zero mod 2, leaving 
\begin{align*}
\vec{x}_{L+1} \stackrel{\mod 2}{=} \left( \hat{I}+ \hat{S}_{L}^{2^n}\right) \vec{x}_{0} =\left( \hat{I}+ \hat{S}_{L}^{L+1}\right) \vec{x}_{0}
\,\,.
\end{align*}
We now note 
\begin{equation*}
\hat{S}_{L}^{L+1} = \hat{S}_{L}^{L} \hat{S}_{L} \,\, ,
\end{equation*}
but from \eqref{S_L_to_L}, $\left(\hat{S}_{L}\right)^{L}=\hat{I}$, so 
\begin{align*}
\vec{x}_{L+1} \stackrel{\mod 2}{=} \left( \hat{I}+ \hat{S}_{L}\right) \vec{x}_{0}
\,\,,
\end{align*}
which by \eqref{time_shift} is the statement 
\begin{align*}
\vec{x}_{L+1} = \vec{x}_{1}
\,\,.
\end{align*}
By Definition \ref{def:period}, this is an $L$-periodic trajectory. Since by Corollary \ref{cor:no_periods_for_odd}, no $_L(odd)$-type initial state can initiate a periodic trajectory, this holds for all $_{\mathcal{N}}(\text{even})$-type initial states.
\end{proof}
\begin{corollary}
    For a ring whose circumference is a Mersenne number, $L$ is the maximal fundamental period.
    \label{cor:Mersenne_max_period}
\end{corollary}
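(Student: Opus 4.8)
The plan is to prove the two bounds $M_{\text{max}}\le L$ and $M_{\text{max}}\ge L$ separately, where $L=2^{n}-1=\mathcal N$. The upper bound is immediate: by Theorem~\ref{th:2^n-1} every $_{\mathcal N}(\text{even})$ state initiates an $L$-periodic trajectory, hence has fundamental period dividing $L$, while by Corollary~\ref{cor:no_periods_for_odd} no $_{\mathcal N}(\text{odd})$ state initiates a periodic trajectory at all; so every fundamental period divides $L$, and in particular $M_{\text{max}}\le L$. All the content is therefore in producing a single initial state whose fundamental period equals $L$.

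For that I would move to the algebraic picture already implicit in \eqref{time_shift}: identify a state $\vec x$ with the polynomial $\sum_i x_i t^i$ in $R=\mathbb F_2[t]/(t^{L}+1)$, so that $\hat S_L$ is multiplication by $t$ and one step of Rule~60 is multiplication by $1+t$. Since $L$ is odd, $t^{L}+1=(t+1)\,g(t)$ with $g(t)=1+t+\dots+t^{L-1}$ squarefree and $\gcd(t+1,g)=1$; the $_{\mathcal N}(\text{even})$ states are exactly the multiples of $t+1$ in $R$, and $1+t$ is invertible modulo $g$. I would then take the even state $\vec x_{\star}=[110\dots0]$, i.e.\ the polynomial $1+t$ (this is also the state reached after one step from the $_{\mathcal N}1$ state, so its orbit length is the $M_{\mathcal P}$ of Lemma~\ref{lem:maximal_M_and_shorter_periods}). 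After $M$ steps $\vec x_\star$ becomes $(1+t)^{M+1}$, so its fundamental period is the least $M\ge1$ with $(1+t)^{M+1}\equiv 1+t\pmod{t^L+1}$, i.e.\ the multiplicative order $d$ of $1+t$ in $(\mathbb F_2[t]/g)^{*}$.

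It then remains to show $d=L$. That $d\mid L$ comes for free from the computation already done in Theorem~\ref{th:2^n-1}: in $R$, $(1+t)^{L+1}=(1+t)^{2^{n}}=1+t^{2^{n}}=1+t\cdot t^{L}=1+t$, hence $(1+t)^{L}\equiv1\pmod g$. For the reverse, suppose $d<L$, so that $g\mid(1+t)^{d}+1$ in $\mathbb F_2[t]$. The key observation is that because $L=2^{n}-1=|\mathbb F_{2^{n}}^{*}|$, the group $\mu_L$ of $L$-th roots of unity is precisely $\mathbb F_{2^{n}}^{*}$, and the roots of $g$ are exactly the elements of $\mu_L$ other than $1$. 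Picking a generator $\zeta$ of $\mathbb F_{2^{n}}^{*}$, so $\operatorname{ord}(\zeta)=L$, the element $\alpha:=\zeta+1$ is nonzero and $\neq1$, hence a root of $g$, so $(1+\alpha)^{d}+1=0$; but $1+\alpha=\zeta$, giving $\zeta^{d}=1$ with $1\le d<L=\operatorname{ord}(\zeta)$, a contradiction. Hence $d=L$, $\vec x_\star$ has fundamental period $L$, and combining with the upper bound, $M_{\text{max}}=L$.

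The main obstacle is exactly that last step: one has to see that the Mersenne hypothesis $L=2^{n}-1$ is precisely what makes $\mu_L=\mathbb F_{2^{n}}^{*}$, so that the Frobenius-like bijection $\alpha\mapsto 1+\alpha$ carries $\mu_L\setminus\{1\}$ onto $\mathbb F_{2^{n}}^{*}\setminus\{1\}$ and hence onto a set that contains a primitive element, which is what rules out $1+t$ having a smaller order modulo $g$. An alternative to exhibiting $\vec x_\star$ is a counting argument resting on the same root bound: by Lemma~\ref{lem:maximal_M_and_shorter_periods} every periodic orbit has length dividing $M_{\text{max}}$, the number of $_{\mathcal N}(\text{even})$ states fixed after $d$ steps equals $2^{\gcd(d,L)-1}$, and equating this (at $d=M_{\text{max}}\mid L$) with the total count $2^{L-1}$ of even states again forces $M_{\text{max}}=L$.
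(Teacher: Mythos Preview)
Your proof is correct but takes a genuinely different route from the paper. The paper's argument is three lines: by Theorem~\ref{th:2^n-1} and Lemma~\ref{lem:odd_and_even_to_even} the $_{L}1$ initial state lands after one step on an $L$-periodic trajectory, and then Lemma~\ref{lem:maximal_M_and_shorter_periods} (a cited black-box identifying $M_{\text{max}}$ with the period $M_{\mathcal P}$ of the $_{L}1$ trajectory) is invoked. Read literally, that only yields $M_{\text{max}}=M_{\mathcal P}\mid L$ together with ``all other fundamental periods divide $L$''; the equality $M_{\mathcal P}=L$ is never actually verified. You instead bypass Lemma~\ref{lem:maximal_M_and_shorter_periods} for the lower bound: you exhibit the concrete state $[110\ldots0]$, identify its fundamental period with the multiplicative order of $1+t$ in $(\mathbb F_2[t]/g)^{*}$, and then use the fact that for $L=2^n-1$ the $L$-th roots of unity coincide with $\mathbb F_{2^n}^{*}$ to force that order to equal $L$. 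This costs more algebra but delivers the full equality $M_{\text{max}}=L$, is self-contained, and makes transparent precisely where the Mersenne hypothesis enters; the paper's approach is shorter and more modular but leans on an external reference and, as written, stops at divisibility.
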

\begin{proof}
    By Theorem \ref{th:2^n-1}, every $_L(\text{even})$-type initial state initiates an $L$ periodic motion when $L$ is a Mersenne number. By Lemma \ref{lem:odd_and_even_to_even}, every odd occupation state becomes even in one forward time step. In particular, this means that any $_L1$ initial state will become $_L(\text{even})$ after one step, and initiate an $L$ periodic motion. By Lemma \ref{lem:maximal_M_and_shorter_periods}, this means $L$ is a maximal fundamental period, and all other fundamental periods divide $L$.
\end{proof}
\begin{corollary}
For a ring whose circumference is a Mersenne prime, every 
$_{L}(\text{even})$-type initial state other than the state with zero occupied sites initiates a $L$-fundamental-periodic trajectory. 
\end{corollary}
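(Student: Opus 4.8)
The plan is to deduce the statement from Theorem~\ref{th:2^n-1} and Corollary~\ref{cor:Mersenne_max_period} together with the one special feature of a Mersenne \emph{prime}: that it has no nontrivial divisors. First I would observe that, by Theorem~\ref{th:2^n-1}, every $_{L}(\text{even})$-type initial state $\vec{x}_{0}$ initiates an $L$-periodic trajectory, so the set $\{M>0 : \vec{x}_{M}=\vec{x}_{0}\}$ is nonempty; let $M_{0}$ be its least element, which is by definition the fundamental period of the trajectory. A short division-algorithm argument shows $M_{0}\mid L$: writing $L=qM_{0}+r$ with $0\le r<M_{0}$ and iterating the one-step map, one gets $\vec{x}_{r}=\vec{x}_{L}=\vec{x}_{0}$, so minimality of $M_{0}$ forces $r=0$. (Equivalently, this is exactly what Corollary~\ref{cor:Mersenne_max_period} already records: every fundamental period divides $L$.)

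Next I would use that $L$ is prime, so its only positive divisors are $1$ and $L$; hence $M_{0}\in\{1,L\}$ for every $_{L}(\text{even})$-type initial state. It then remains only to identify the states with $M_{0}=1$, i.e.\ the stationary states. If $\vec{x}_{1}=\vec{x}_{0}$, then by the rule \eqref{rule_60} we have $x_{i,0}=x_{i,1}=\text{Xor}(x_{i-1,0},x_{i,0})$ for every $i$; since $\text{Xor}(a,b)=b$ holds precisely when $a=0$, this forces $x_{i-1,0}=0$ for all $i$, i.e.\ $\vec{x}_{0}=[00\ldots 0]$. So the unique stationary state is the all-zeros state (consistent with Lemma~\ref{lem:all_0s}), and it is the one excluded in the statement.

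Combining these two steps, for any $_{L}(\text{even})$-type initial state other than $[00\ldots 0]$ the fundamental period $M_{0}$ cannot be $1$ and therefore equals $L$, which is precisely the assertion that the trajectory is $L$-fundamental-periodic. I do not anticipate a real obstacle here: the only point that needs a careful sentence is the claim that the fundamental period divides $L$ — the standard fact that the minimal period divides every period — and that is already packaged in Corollary~\ref{cor:Mersenne_max_period}. Everything else is the elementary observation that a prime has no room for intermediate cycle lengths, plus the one-line computation pinning down the unique fixed point of Rule 60.
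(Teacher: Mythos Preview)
Your proposal is correct and follows essentially the same route as the paper: invoke Theorem~\ref{th:2^n-1} for $L$-periodicity, use Corollary~\ref{cor:Mersenne_max_period} (or the equivalent division-algorithm step) to force the fundamental period to divide $L$, and then exploit primality to leave only the options $1$ and $L$. Your version is in fact slightly more careful, since you explicitly verify that $[00\ldots 0]$ is the \emph{unique} fixed point, whereas the paper only cites Lemma~\ref{lem:all_0s} to note that it \emph{is} one.
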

\begin{proof}
By Theorem \ref{th:2^n-1}, every $_L(\text{even})$-type initial state initiates an $L$ periodic motion when $L$ is a Mersenne number, and by Corollary \ref{cor:Mersenne_max_period} $L$ is a maximal fundamental period, and all other fundamental periods divide $L$. In this case where $L$ is a Mersenne prime, there are no factors of $L$ besides $L$ and one, so $L$ is the fundamental period. By Lemma \ref{lem:all_0s}, the excluded all-zero state initiates the period one trajectory.

\end{proof}
\subsection{Propagator}
To aid in the aim of constructing all ground states of the Newman-Moore model, let us first introduce a propagator, i.e.\ a solution of \eqref{rule_60} that has only one site occupied at $j=0$. Namely let us introduce 
\begin{align}
\mathcal{P}_{(i_{},j_{}) \leftarrow (i_{0},0)}
\label{propagator}
\end{align}
that obeys 
\begin{align}
\begin{array}{lcc}
\quad\mathcal{P}_{(i,0)\leftarrow (i_{0},0)} = \delta_{i_{},i_{0}} 
\\
\mathcal{P}_{(i,j+1)\leftarrow (i_{0},0)} = 
\text{Xor}(\mathcal{P}_{(i-1,j)\leftarrow (i_{0},0)},\,\mathcal{P}_{(i,j)\leftarrow (i_{0},0)}) \\\qquad \qquad \qquad \text{ for } \hspace{0.25em} i=1,\,2,\,\ldots,\,L-1
\\
\mathcal{P}_{(0,j+1)\leftarrow (i_{0},0)} =\text{Xor}(\mathcal{P}_{(L-1,j)\leftarrow (i_{0},0)},\,\mathcal{P}_{(0,j)\leftarrow (i_{0},0)})& &
\\
\quad j=0,\,1,\,\ldots
\\
\quad i_{0}=0,\,1, \ldots ,\,L-1
\end{array}
\label{rule_60__for_propagator}
\end{align}
\begin{theorem} \label{prop_thrm}
For $L=2^n-1$ ($n=2,\,3,\,\ldots$),
\begin{align}
   & \mathcal{P}_{(i,j)\leftarrow (0,0)} =
    \begin{cases}
        \binom{j}{i}_L \!\!\!\! \mod{2}, \quad 0 \leq j \leq L-1\\
        (1-\delta_{i,0}), \quad j = L\\
        \mathcal{P}_{(i,(j-1 \!\!\!\! \mod L)+1)\leftarrow (0,0)}, \quad j > L
    \end{cases}
\label{explicit_propagator}
\end{align}

\label{th:propagator}
\end{theorem}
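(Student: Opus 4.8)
The plan is to work in the operator picture of \eqref{time_shift}. The propagator of \eqref{propagator}--\eqref{rule_60__for_propagator} is simply the Rule~60 evolution of the initial state $\vec{e}_0 := [1,0,\ldots,0]$ (the $_L1$ configuration with site $0$ occupied), so $\mathcal{P}_{(i,j)\leftarrow(0,0)} \stackrel{\mod 2}{=} \big[(\hat{I}+\hat{S}_L)^{j}\vec{e}_0\big]_i$. Since $\hat{S}_L\vec{e}_0=\vec{e}_1$ and $(\hat{S}_L)^L=\hat{I}$ by \eqref{S_L_to_L}, we have $(\hat{S}_L)^k\vec{e}_0=\vec{e}_{k\bmod L}$, and expanding by the binomial theorem yields
\begin{align*}
\mathcal{P}_{(i,j)\leftarrow(0,0)} \stackrel{\mod 2}{=} \sum_{k=0}^{j}\binom{j}{k}\,\delta_{i,\,k\bmod L} \;=\; \sum_{\substack{0\le k\le j\\ k\equiv i\!\!\pmod{L}}}\binom{j}{k}\,.
\end{align*}
Everything then reduces to deciding, for each relevant $j$, which of the few binomial coefficients in this short sum are odd.

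For $0\le j\le L-1$ the only index with $k\equiv i\pmod L$ in the range $0\le k\le j\le L-1$ is $k=i$ itself, so the sum collapses to $\binom{j}{i}\bmod 2$ (with the usual convention $\binom{j}{i}=0$ for $i>j$), which is the first case. For $j=L$ the range is $0\le k\le L$: when $1\le i\le L-1$ again only $k=i$ survives and we need $\binom{L}{i}$ odd, while for $i=0$ precisely the two indices $k=0$ and $k=L$ survive, contributing $\binom{L}{0}+\binom{L}{L}=2\equiv 0$. Both facts follow from $L=2^n-1$ having an all-ones binary expansion, i.e.\ $\binom{2^n-1}{k}$ is odd for every $0\le k\le 2^n-1$: over $\mathbb{F}_2$, Lemma~\ref{lem:binimial_2^n} gives $(x+1)^{2^n}=x^{2^n}+1$, hence $(x+1)^{2^n-1}=x^{2^n-1}+x^{2^n-2}+\cdots+x+1$, so all its coefficients are $1$. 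Thus $\mathcal{P}_{(i,L)\leftarrow(0,0)}=1-\delta_{i,0}$, the second case.

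For $j>L$ I would invoke periodicity. The state $\vec{x}_1$ obtained from $\vec{e}_0$ after one step is $_L(\text{even})$ by Lemma~\ref{lem:odd_and_even_to_even}, so Theorem~\ref{th:2^n-1} applied with $\vec{x}_1$ as the initial condition gives $\vec{x}_{1+L}=\vec{x}_1$, and likewise $\vec{x}_{m+L}=\vec{x}_m$ for every $m\ge 1$. Hence $j\mapsto\mathcal{P}_{(i,j)\leftarrow(0,0)}$ has period $L$ on $j\ge 1$, and iterating $\vec{x}_j=\vec{x}_{j-L}$ down from $j>L$ lands at the unique representative of $j$ modulo $L$ inside $\{1,\ldots,L\}$, which is exactly $((j-1)\bmod L)+1$ (a short check separating $j\equiv 0$ from $j\not\equiv 0\pmod L$). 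This is the third case and closes the proof.

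I do not anticipate a serious obstacle: the only non-bookkeeping input is the all-ones binomial fact, which Lemma~\ref{lem:binimial_2^n} supplies. The step most prone to slips is the modular index accounting --- verifying that no wraparound terms collide for $j\le L-1$, that exactly the pair $(k=0,\,k=L)$ collides at site $i=0$ when $j=L$, and that the reduction formula $((j-1)\bmod L)+1$ is the correct representative --- so I would write those details out with care. (An equivalent, more hands-on route avoids operators: induct on $j$ with Pascal's identity for case~1, observing that the wraparound input $x_{L-1,j-1}$ stays $0$ for $j\le L-1$; note via Lucas' theorem that at $j=L-1$ exactly the even sites are occupied, whence one Rule~60 step gives case~2; and use the same periodicity argument for case~3.)
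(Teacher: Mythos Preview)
Your argument is correct and in fact more complete than the paper's own proof: you handle all three cases of \eqref{explicit_propagator}, whereas the paper only verifies the first case (the initial condition and the Rule~60 recursion for $0\le j\le L-1$) and is silent on $j=L$ and $j>L$.

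The route, however, is genuinely different. The paper works directly with the $q$-binomial recursion $\binom{j+1}{i}_L=\binom{j}{i-1}_L+L^{i}\binom{j}{i}_L$ and observes that $L^{i}\equiv 1\pmod 2$ for $L=2^n-1$, so mod~2 the $q$-binomial obeys Pascal's rule, which is Rule~60. Your approach instead expands $(\hat{I}+\hat{S}_L)^j$ binomially and reads off coefficients, reducing everything to ordinary binomials $\binom{j}{k}$ plus wraparound bookkeeping. What your method buys is a clean, uniform treatment of the boundary and of the $j=L$ case via the single fact that $\binom{2^n-1}{k}$ is odd for all $k$ (which you derive neatly from Lemma~\ref{lem:binimial_2^n}), together with a direct appeal to Theorem~\ref{th:2^n-1} for periodicity when $j>L$. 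What the paper's method buys is that it speaks the language of the stated formula: the theorem is phrased with the $q$-binomial $\binom{j}{i}_L$, not the ordinary $\binom{j}{i}$, and the paper's recursion argument explains why the $q$-binomial appears at all. In your write-up you should add one sentence noting that $\binom{j}{i}_L\equiv\binom{j}{i}\pmod 2$ whenever $L$ is odd (same initial values, same recursion mod~2), so that your ordinary-binomial answer matches the $q$-binomial in the statement; otherwise a reader may wonder whether you have proved a slightly different formula. The ``hands-on'' alternative you sketch at the end is essentially the paper's argument.
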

where $ \binom{j}{i}_L$ is the q-binomial coefficient defined as 
\begin{equation}
    \binom{n}{m}_q = \prod_{k=0}^{m-1} \frac{1-q^{n-k}}{1-q^{k+1}}
\end{equation}
\begin{proof}
For the initial state, note that
\begin{align}
\binom{0}{i}_L &= 
\begin{cases}
1 \quad i=0\\
0 \quad \text{otherwise}
\end{cases}
\end{align}
so the $j=0$ state has a single occupied site at $i=0$ as desired.

As stated in \eqref{rule_60}, , the time evolution is given by
\begin{align}
x_{i,j+1} = \text{Xor}(x_{i-1,j},\,x_{i,j}), & \quad 0 < j \leq L-1.
\end{align}
This is equivalent to addition mod 2, so 
\begin{equation}
x_{i,j+1} = \left[x_{i-1j} + x_{ij} \right]\mod{2}.
\label{rule_60_addition}
\end{equation}
We make use of the known recursion relation for the q-binomial coefficients
\begin{equation}
\binom{j+1}{i}_L = \binom{j}{i-1}_L + L^i \binom{j}{i}_L.
\label{qbinomial_recursion}
\end{equation}
Note that for $L = 2^n -1$
\begin{equation}
    L^i \mod{2} = 1 \quad \forall i
\end{equation}
since all terms in the expansion of $(2^n -1)^i$ will be powers of two and their multiples, except the $\binom{i}{i} \left(2^{n}\right)^{0}(-1)^{i}$ term. Thus
\begin{equation}
\binom{j+1}{i}_L \mod{2} = \left[\binom{j}{i-1}_L + \binom{j}{i}_L\right] \mod{2},
\label{qbinomial_sum}
\end{equation}
which is equivalent to \eqref{rule_60_addition}. 
\end{proof}

For other propagators, the translational invariance dictates 
\begin{theorem}
\begin{align}
\begin{split}
&
\mathcal{P}_{(i,j)\leftarrow (i_{0},0)} = 
\left\{
\begin{array}{lcc}
\mathcal{P}_{(i+i_{0},j)\leftarrow (i_{0},0)} & \text{\rm for } & i \leq 2(L-1)-i_{0}
\\
\mathcal{P}_{(i+i_{0}-(L-1),j)\leftarrow (i_{0},0)} & \text{\rm for } & i> 2(L-1)-i_{0}
\end{array}
\right.
\\
&
i_{0} = 0,\,1\,\ldots,\,L-1
\\
&
i = 0,\,1\,\ldots,\,L-1
\\
&
j = 0,\,1\,\ldots,\,L-1
\end{split}
\label{all_explicit_propagators}
\end{align}
\end{theorem}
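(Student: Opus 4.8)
The plan is to derive \eqref{all_explicit_propagators} from a single structural fact: Rule 60 is \emph{spatially translation invariant}. Because the update \eqref{rule_60} applies the same XOR at every site and couples a site only to its left neighbour, rigidly rotating an initial condition around the ring rotates the whole trajectory in lock-step. Hence the propagator seeded at an arbitrary site $i_{0}$ is just the propagator seeded at site $0$ rotated by $i_{0}$, and the two-line formula \eqref{all_explicit_propagators} is nothing more than the bookkeeping needed to re-express ``rotation by $i_{0}$ on a ring of circumference $L$'' with every site label back in $\{0,1,\ldots,L-1\}$: the first case collects the labels that do not wrap past site $L-1$, the second those that do.

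To make this precise I would reuse the left-shift operator $\hat{S}_{L}$ and the evolution identity \eqref{time_shift} already set up in the proof of Theorem \ref{th:2^n-1}. By \eqref{rule_60__for_propagator} the propagator $\mathcal{P}_{(\cdot,\,j)\leftarrow(i_{0},0)}$ is the trajectory whose initial vector $\hat{e}_{i_{0}}$ carries a single $1$ at site $i_{0}$; since $\hat{e}_{i_{0}}=\hat{S}_{L}^{\,i_{0}}\hat{e}_{0}$ and $\hat{S}_{L}$ commutes with $\hat{I}+\hat{S}_{L}$ (each being a polynomial in $\hat{S}_{L}$), \eqref{time_shift} yields
\begin{align*}
\mathcal{P}_{(\cdot,\,j)\leftarrow(i_{0},0)}\stackrel{\mod 2}{=}(\hat{I}+\hat{S}_{L})^{j}\,\hat{S}_{L}^{\,i_{0}}\,\hat{e}_{0}=\hat{S}_{L}^{\,i_{0}}\,(\hat{I}+\hat{S}_{L})^{j}\,\hat{e}_{0}=\hat{S}_{L}^{\,i_{0}}\,\mathcal{P}_{(\cdot,\,j)\leftarrow(0,0)}\,\,.
\end{align*}
Reading off the $i$-th component with the defining action of $\hat{S}_{L}$ (site indices reduced modulo $L$ via \eqref{S_L_to_L}) gives
\begin{align*}
\mathcal{P}_{(i,j)\leftarrow(i_{0},0)}=\mathcal{P}_{((i-i_{0})\bmod L,\,j)\leftarrow(0,0)}\,,\qquad\text{equivalently}\qquad\mathcal{P}_{((i+i_{0})\bmod L,\,j)\leftarrow(i_{0},0)}=\mathcal{P}_{(i,j)\leftarrow(0,0)}\,\,.
\end{align*}
Splitting on whether the modular reduction is active turns either form into the two cases of \eqref{all_explicit_propagators}.

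If one prefers to avoid the operator language, the same relation admits a short induction on $j$ straight from \eqref{rule_60__for_propagator}: at $j=0$ both sides equal $\delta_{i,i_{0}}$ (note $\delta_{(i-i_{0})\bmod L,\,0}=\delta_{i,i_{0}}$ for $i,i_{0}\in\{0,\ldots,L-1\}$), and the inductive step is immediate because applying \eqref{rule_60} is the same operation at every site and therefore commutes with a cyclic relabelling, the wrap-around site $i=0$ being handled by the boundary clause of \eqref{rule_60}.

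The only step requiring real care is the last one — the ring-index arithmetic. One must check that the case boundary, the amount subtracted in the wrapping case, and the stated ranges of $i$, $i_{0}$, $j$ all line up consistently with the orientation chosen for $\hat{S}_{L}$ and with the time-periodicity clause $\mathcal{P}_{(i,j)\leftarrow(0,0)}=\mathcal{P}_{(i,(j-1\bmod L)+1)\leftarrow(0,0)}$ from \eqref{explicit_propagator}. There is no analytic content beyond translation invariance; the proof is entirely bookkeeping, and that is precisely where slips are easy to make.
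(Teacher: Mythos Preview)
Your approach is the paper's approach: the paper's entire proof reads ``This follows directly from the definition of the propagator,'' i.e., precisely the translation-invariance observation you spell out. Your $\hat{S}_{L}$-operator derivation and the inductive alternative simply make explicit what the paper leaves as a one-line remark; your closing caveat about the ring-index bookkeeping is well placed, since that is the only place a slip can occur.
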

\begin{proof}
This follows directly from the definition of the propagator. 
\end{proof}
The propagator \eqref{all_explicit_propagators} can be used to propagate the Rule 60 automaton with $L=2^{n}-1$:
\begin{align*}
&
x_{i,j}=\sum_{i_{0}=0}^{L-1} \mathcal{P}_{(i,j)\leftarrow (i_{0},0)} x_{i_{0},j}
\\
&
i = 0,\,1\,\ldots,\,L-1
\\
&
j = 0,\,1\,\ldots,\,L-1
\end{align*}
Note that Equations \eqref{explicit_propagator} and \eqref{all_explicit_propagators} are not directly applicable for $L\neq 2^{n}-1$. 

\section{Application to the Newman-Moore model}
%
%
\subsection{Connection between the Rule 60 automaton and the Newman-Moore model}
To connect the Rule 60 automaton and the two-dimensional spin lattices, one considers the so-called triangular plaquette model or Newman-Moore model \cite{newman1999_5068,eczoo}:
\begin{align}
H = -J\sum_{i=0}^{L-1} \sum_{j=0}^{M-1} 
\sigma_{i,j} \sigma_{i+1,j} \sigma_{i+1,j+1}
\label{H}
\end{align}
with periodic boundary conditions. Notice that according to Rule 60, it is always the case that 
\begin{align*}
(1-2x_{i,j})(1-2x_{i+1,j})(1-2x_{i+1,j+1}) = 1
\,\,.
\end{align*}
Hence, every $M$-periodic trajectory of \eqref{rule_60} will produce a ground state of the Hamiltonian \eqref{H}, whose energy is 
\begin{align*}
E_{\text{ground}} = E_{\uparrow\ldots\uparrow} = -JL M
\,\,,
\end{align*}
if one uses the following association: \[\sigma_{i,j}=(1-2x_{i,j})\,\,.\] Notice that ``all spins up'' is one of the ground states.

%
\subsection{Newman-Moore results inferred from the properties of the Rule 60 automaton}
We are now ready to address the principal application of our cellular automata result: the Newman-Moore model. In what follows, we will be able to generate all  ground state configurations of the Newman-Moore model \eqref{H} with 
\begin{align*}
&
M=L=2^n-1
\\
&
n=2,\,3,\,\ldots
\,\,,
\end{align*}
i.e. for a square Newman-Moore lattice with the sides equal to one of the Mersenne numbers (not necessarily prime). 

We assert the following.
\begin{theorem}
The ground state of the Newman-Moore model \eqref{H} with $M=L=2^n-1$ is $2^{L-1}$ degenerate. 
\label{th:N-M_degeneracy}
\end{theorem}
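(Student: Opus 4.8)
# Proof Proposal

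The plan is to count the ground states by exploiting the correspondence between ground states of the Newman--Moore Hamiltonian \eqref{H} and periodic trajectories of the Rule 60 automaton. Under the association $\sigma_{i,j} = (1-2x_{i,j})$, every $M$-periodic trajectory of \eqref{rule_60} yields a ground state, so the count of ground states equals the number of distinct $L$-periodic trajectories (as subsets of the $M \times L$ toroidal lattice), where $M = L = 2^n - 1$. The key structural fact, already established, is Theorem~\ref{th:2^n-1}: on a ring of Mersenne circumference $\mathcal{N} = L$, \emph{every} $_{L}(\text{even})$-type state initiates an $L$-periodic trajectory. So the first step is to argue that the set of ground-state configurations on the torus is in bijection with the set of $_{L}(\text{even})$-type rows that can appear as a ``time-zero'' slice, and then to count those.

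First I would make the bijection precise. A ground-state configuration of \eqref{H} on the $L\times L$ torus is exactly an assignment $\{x_{i,j}\}$ such that each row $\vec{x}_{j+1}$ is the Rule 60 image of row $\vec{x}_j$ (including the wrap-around from $j = L-1$ to $j = 0$, which by Theorem~\ref{th:2^n-1} is automatically satisfied once $\vec{x}_0$ is even). Hence a ground state is uniquely determined by its row $\vec{x}_0$, subject to the constraint that applying Rule 60 $L$ times returns to $\vec{x}_0$. By Theorem~\ref{th:2^n-1} this constraint is satisfied by every even $\vec{x}_0$, and by Corollary~\ref{cor:no_periods_for_odd} it is satisfied by no odd $\vec{x}_0$. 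Therefore the number of ground states equals the number of $_{L}(\text{even})$-type states on a length-$L$ ring. The second step is the elementary count: by Lemma~\ref{lem:total_even=total=odd}, exactly half of the $2^L$ configurations are even, giving $2^{L}/2 = 2^{L-1}$, which is the claimed degeneracy.

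The one subtlety I would be careful about — and I expect this to be the main obstacle — is the claim that distinct rows $\vec{x}_0$ give genuinely distinct ground-state \emph{configurations} on the torus, i.e. that no two different choices of $\vec{x}_0$ produce the same doubly-periodic spin pattern viewed as an unlabeled subset. Since we are counting configurations with a fixed origin (the Hamiltonian sum is over labeled sites $i,j$), two ground states are the same iff they agree site-by-site, and distinct rows at $j = 0$ trivially disagree at $j = 0$; so there is no overcounting and no need to quotient by translations. I would state this explicitly to forestall confusion. A secondary point to mention is that $\vec{x}_0$ need not initiate a \emph{fundamental} $L$-period — it may have a shorter fundamental period dividing $L$ — but this does not affect the count, because we are counting configurations (determined by $\vec{x}_0$), not orbits; every even $\vec{x}_0$ still yields a valid, possibly ``shorter-period but $L$-consistent'' filling of the torus. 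Tying these observations together with Lemma~\ref{lem:total_even=total=odd} completes the proof.
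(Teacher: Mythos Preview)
Your proposal is correct and follows essentially the same route as the paper: the paper's proof is the one-liner ``follows directly from Theorem~\ref{th:2^n-1},'' and what you have written is a careful unpacking of exactly that, supplementing it with the explicit invocations of Corollary~\ref{cor:no_periods_for_odd} and Lemma~\ref{lem:total_even=total=odd} needed to reach the count $2^{L-1}$. Your discussion of the labeled-site vs.\ orbit subtlety and the fundamental-period remark are helpful clarifications that the paper leaves implicit.
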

\begin{proof}
The proof follows directly from Theorem \ref{th:2^n-1}.
\end{proof}

We will now provide explicit expressions for the ground state configurations of the  Newman-Moore model \eqref{H} on an $L\times L$ square lattice with $L$ being equal to one of the Mersenne numbers $2^{n}-1$.  

The ground state configurations will be labeled by the state of the $0$-th row:
\[
[\sigma_{0,0},\,\sigma_{1,0},\,\ldots,\,\sigma_{L-1,0}]
\,\,,
\]
where number of spin-down sites, $\sigma=-1$ is even: there are $2^{L-1}$ such configurations, consistent with the $2^{L-1}$-fold degeneracy.
\begin{theorem}
The ground states of the  Newman-Moore model \eqref{H} on an $L\times L $ square lattice with $L=(2^{n}-1)$ ($n=2,\,3,\,\ldots$) read 
\begin{align}
\begin{split}
&
\sigma_{i,j}=\sum_{i_{0}=0}^{L-1} 
\left(1-2\mathcal{P}_{(i,j)\leftarrow (i_{0},0)}
\frac{1-\sigma_{i_{0},j}}{2}\right)
\\
&
i = 0,\,1\,\ldots,\,2^n-2
\\
&
j = 0,\,1\,\ldots,\,2^n-2
\end{split}
\label{ground_states}
\end{align}
where the propagator 
$\mathcal{P}_{(i,j)\leftarrow (i_{0},0)}$ is given by 
\eqref{all_explicit_propagators} and 
$[\sigma_{0,0},\,\sigma_{1,0},\,\ldots,\,\sigma_{2^n-2,0}]$ is any of the $2^{L-1}$ spin configurations characterised by the number of the $\sigma=-1$ sites being even.
\label{th:N-M_states}
\end{theorem}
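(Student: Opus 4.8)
The plan is to first pin down the ground states combinatorially and then read off the closed form from the propagator. Since every summand in \eqref{H} lies in $\{-J,+J\}$, one has $H \geq -JLM$, with equality exactly when every triangular plaquette product obeys $\sigma_{i,j}\sigma_{i+1,j}\sigma_{i+1,j+1}=1$; the all-up configuration attains this bound, so the ground states are precisely the configurations all of whose plaquette products equal $+1$. Under $\sigma_{i,j}=1-2x_{i,j}=(-1)^{x_{i,j}}$, the constraint $\sigma_{i+1,j+1}=\sigma_{i,j}\sigma_{i+1,j}$ is equivalent to $x_{i+1,j+1}=\text{Xor}(x_{i,j},x_{i+1,j})$; shifting the column index by one and invoking the lattice periodicity in the $i$-direction for the wrap-around column, this is exactly the Rule 60 step \eqref{rule_60} taking row $\vec{x}_j$ to row $\vec{x}_{j+1}$. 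Hence a spin configuration on the $M\times L$ torus is a ground state if and only if its rows form a Rule 60 trajectory that is moreover $M$-\emph{periodic}, i.e.\ $\vec{x}_M=\vec{x}_0$ in the sense of Definition \ref{def:period}; the ground states are thus in bijection with the admissible initial rows $\vec{x}_0$.

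Next I would determine the admissible $\vec{x}_0$ for $M=L=2^n-1$. Each row with $j\geq 1$ is a Rule 60 image, and so is $\vec{x}_0=\vec{x}_M$ (as $M\geq 1$); hence by Lemma \ref{lem:odd_and_even_to_even} every ground-state row, in particular the $0$-th, is of $_{L}(\text{even})$-type, and by Corollary \ref{cor:no_periods_for_odd} an odd $\vec{x}_0$ indeed cannot start a periodic trajectory. Conversely, Theorem \ref{th:2^n-1} says that for a Mersenne circumference every $_{L}(\text{even})$-type initial row initiates an $L$-periodic trajectory, so, with $M=L$, every even $\vec{x}_0$ really does close up into an honest ground-state configuration on the torus. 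Distinct even $\vec{x}_0$ give distinct configurations, so the ground states are exactly those parametrized by the even-weight binary strings of length $L$, of which there are $2^{L-1}$ (Lemma \ref{lem:total_even=total=odd}); this recovers Theorem \ref{th:N-M_degeneracy} and justifies labeling ground states by the $0$-th row.

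Finally I would extract the explicit expression. Because Rule 60 is $\mathbb{F}_2$-linear, superposing the single-site solutions of \eqref{rule_60__for_propagator} gives $x_{i,j}\equiv\sum_{i_0=0}^{L-1}\mathcal{P}_{(i,j)\leftarrow(i_0,0)}\,x_{i_0,0}\pmod 2$ for all $0\leq j\leq L-1$, with $\mathcal{P}$ available in closed form from Theorem \ref{prop_thrm} and \eqref{all_explicit_propagators}. Converting back via $\sigma_{i,j}=(-1)^{x_{i,j}}$ together with $(-1)^{\sum_k b_k}=\prod_k(-1)^{b_k}$, taken with $b_{i_0}=\mathcal{P}_{(i,j)\leftarrow(i_0,0)}\,x_{i_0,0}$ and $x_{i_0,0}=(1-\sigma_{i_0,0})/2$, turns the modular sum into a product over $i_0$ of the factors $1-2\,\mathcal{P}_{(i,j)\leftarrow(i_0,0)}\tfrac{1-\sigma_{i_0,0}}{2}$, which is the content of \eqref{ground_states} (the right-hand side being understood as a product over $i_0$ acting on the fixed $0$-th-row data $\sigma_{i_0,0}$). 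Letting $i,j$ run over $0,\dots,2^n-2$ and $[\sigma_{0,0},\dots,\sigma_{L-1,0}]$ over the $2^{L-1}$ even-weight strings then enumerates every ground state exactly once.

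The one genuinely substantive ingredient is Theorem \ref{th:2^n-1}; granting it, what remains is careful index bookkeeping --- aligning the asymmetric plaquette in \eqref{H} with the asymmetric Rule 60 rule, including the $i=0$ wrap-around --- and the routine passage between the $\mathbb{F}_2$-linear superposition and the multiplicative $\pm1$ form. The point that deserves a word of care is completeness of the list: one must verify both that no ground state is missed (its $0$-th row is \emph{forced} to be even, so it appears) and that every even $0$-th row yields a legitimate toroidal configuration rather than merely an aperiodic strip --- and the latter is precisely what Theorem \ref{th:2^n-1} supplies, so no further argument is needed there.
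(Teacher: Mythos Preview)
The paper does not actually supply a proof for this theorem; it is stated immediately after Theorem~\ref{th:N-M_degeneracy} as the evident consequence of the machinery assembled in Sections~3 and~4.1. Your argument fills in precisely that implicit line --- ground states $\leftrightarrow$ $M$-periodic Rule~60 trajectories $\leftrightarrow$ even-weight initial rows via Theorem~\ref{th:2^n-1} $\leftrightarrow$ the propagator superposition --- and is correct. You also correctly flag that the displayed formula \eqref{ground_states} must be read as a \emph{product} over $i_0$ (not the literal $\sum$) acting on the $0$-th-row data $\sigma_{i_0,0}$ (not $\sigma_{i_0,j}$); these are typos in the statement, and your derivation via $(-1)^{\sum b_{i_0}}=\prod(1-2b_{i_0})$ makes the intended meaning unambiguous.
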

%

\section{Summary of results}
Theorem \ref{th:2^n-1} plays the central role in our paper. It allows to identify all the initial conditions that lead to periodic trajectories in a Rule 60 cellular automaton on a ring whose circumference $L$ is given by one of the Mersenne numbers, $L=2^{n}-1$ . Furthermore, all periodic trajectories are shown to share the period $L$. The Theorem \ref{th:propagator} constructs the propagators for this model. 

The above results allow us to list all the ground states of the Newman-Moore model. These results are summarized in Theorems \ref{th:N-M_degeneracy} and \ref{th:N-M_states}

\section*{Acknowledgements}
The authors wish to thank K. Sfairopoulos for bringing this problem to their attention, for providing invaluable guidance on references, and for many useful discussions.

A significant portion of this work was produced during the thematic trimester on ``Quantum Many-Body Systems Out-of-Equilibrium'', at the  Institut Henri Poincaré (Paris): MO  is immeasurably grateful to the organizers of the trimester, Rosario Fazio, Thierry Giamarchi, Anna Minguzzi, and Patrizia Vignolo,  for an opportunity to be a part of it. MO and JC-P wish to express their gratitude to the International High IQ Society for the networking opportunities that it offers. 

\paragraph*{Funding information} 
MO and JR were supported by the NSF Grant No.~PHY-2309271. MO would like to thank the Institut Henri Poincar\'{e} (UAR 839 CNRS-Sorbonne Université) and the LabEx CARMIN (ANR-10-LABX-59-01) for their support.

\bibliography{JoCA/TEX-Style-Guide/Rule60NM}

\end{document}